\long\def\@makecaption#1#2{\ifx\@captype\@IEEEtablestring%
	\footnotesize\begin{center}{\normalfont\footnotesize #1}\\
		{\normalfont\footnotesize\scshape #2}\end{center}%
	\@IEEEtablecaptionsepspace
	\else
	\@IEEEfigurecaptionsepspace
	\setbox\@tempboxa\hbox{\normalfont\footnotesize {#1.}~~ #2}%
	\ifdim \wd\@tempboxa >\hsize%
	\setbox\@tempboxa\hbox{\normalfont\footnotesize {#1.}~~ }%
	\parbox[t]{\hsize}{\normalfont\footnotesize \noindent\unhbox\@tempboxa#2}%
	\else
	\hbox to\hsize{\normalfont\footnotesize\hfil\box\@tempboxa\hfil}\fi\fi}
\newtheorem{remark}{Remark}
\newtheorem{thm}{Theorem}
\newtheorem{lem}[thm]{Lemma}
\begin{document}
%
\title{Compound Poisson  Noise Sources in Diffusion-based Molecular Communication}

\author{Ali Etemadi$^{\dagger}$, Paeiz Azmi$^{\dagger}$, Hamidreza~Arjmandi$^*$, Nader Mokari$^{\dagger}$\\
$^{\dagger}$Tarbiat Modares University, $^*$Yazd University
}

\maketitle

\begin{abstract}
Diffusion-based molecular communication (DMC) is one of the most promising approaches for realizing nano-scale communications for healthcare applications. The DMC systems in in-vivo environments may encounter biological entities that release molecules identical to the molecules used for signaling as part of their functionality. Such entities in the environment act as external noise sources  from the DMC system's perspective. In this paper, the release of molecules by biological external noise sources is particularly modeled as a compound Poisson process.  The impact of compound Poisson noise sources (CPNSs) on the performance of a point-to-point DMC system is investigated. To this end, the noise from the CPNS observed at the receiver is characterized. Considering a simple on-off keying (OOK)  modulation and formulating  symbol-by-symbol  maximum likelihood (ML) detector, the performance of DMC system in the presence of the CPNS is analyzed. For special case of CPNS in high-rate regime, the noise received from the CPNS is approximated as a Poisson process whose rate is normally  distributed. In this case, it is proved that a simple single-threshold detector (STD) is an optimal ML detector. Our results reveal that in general, adopting the conventional simple homogeneous Poisson noise model may lead to overly optimistic performance predictions, if a CPNS is present.
      

\end{abstract}

\begin{IEEEkeywords}
Diffusion-based molecular communication (DMC), biological entities, compound Poisson noise source (CPNS), compound Poisson process (CPP), maximum likelihood detector.
\end{IEEEkeywords}

%
\IEEEpeerreviewmaketitle

\section{Introduction}
%
%
%
%

\subsection{Motivation}
Diffusion based molecular communication (DMC) is a promising approach for realizing nano communications \cite{Akyldiz2011}. In DMC, information is encoded in the concentration, type, and/or release time of  molecules. In particular, a transmitter nanomachine releases  information molecules  into the environment. The released molecules move randomly via Brownian motion and, as a consequence, some molecules may be observed (received) at the receiver \cite{Pierobon2011}.
Specific features of DMC, such as its bio-compatibility, make it attractive for  healthcare applications 
\cite{Akyldiz2015},\cite{nanomedicine},\cite{Chahibi2013}.  
However, the application of DMC systems in in-vivo environments faces many practical challenges and requires extensive research and development. Particularly, it is essential to analyze the impact of the biological external noise sourses on the performance of DMC systems \cite{a5}.
\color{black}

The biological entities in the body release different types of molecules as part of their functionality. The molecule release processes of biological entities exhibit a random behavior both for the time of release and the number of the released molecules. For instance, Poisson, Gaussian, and Weibull renewal models have been proposed for the timing of the bursts in the endocrine systems \cite{hormone}. In particular, for the neuroendocrine system, the secretory bursts at random time instants have been modeled as a Poisson point process \cite{Neuro},\cite{Neuro2}. 
As another example, the release of neurotransmitters  in synapses has been characterized by a doubly stochastic Poisson process \cite{Synapse2}.
Also, the numbers of molecules transported by ion channels and ion pumps  across the cell membrane can be stochastically modeled as Poisson random variables (RVs) \cite{Pump}. DMC systems operating in in-vivo environments may encounter biological entities that release molecules identical to the molecules  used for signaling. Such entities in the environment act as external noise sources from the DMC system's perspective.  Therefore, accurate  modeling of such noise sources taking into account their intrinsic characteristics is crucial for comprehensive performance analysis and evaluation. 
\subsection{Related Works}
\definecolor{alizarin}{rgb}{0.82, 0.1, 0.26}
In the MC literature, different noise models have been proposed to characterize the uncertainty inherent to the molecule release process at the transmitter, Brownian motion, the reception process at the receiver, and environment noise.
In \cite{Noise2011}, the noises introduced by the transmitter and the diffusion channel are  modeled as  additive Gaussian noise and are referred to as particle sampling and particle counting noise, respectively. In \cite{eckford}, it is shown that additive inverse Gaussian noise can be used to model the molecular timing channel where the information is encoded into the release time of the molecules into the fluid medium with drift. In \cite{Arjmandi2013}, the authors propose a Poisson model to characterize the noise due to the randomness of the transmitter release process and the Brownian motion in the fluid medium. 
Also, in \cite{nariman}, additive stable distribution noise is introduced to characterize molecular timing channel for different modulation schemes. The authors in \cite{mallik} consider the continuous collision of molecules as source of noise leading to uncertainty in the position of the molecules. To mathematically model this noise source, the Langevin model for Brownian motion in a fluid medium is considered. 
The uncertainty caused by the reception process of ligand receptors is considered in  \cite{arash}-\cite{aminian}. 
Unlike for the noise introduced by the transmitter release, Brownian motion, and the reception process, less considerations has been given to the environmental noise in the DMC systems. The authors of \cite{Arjmandi2013} model environmental noise by a homogeneous Poisson distribution whose parameter is equal to the average number of molecules received during a given time slot. The homogeneous Poisson noise model is more elaborately presented  in \cite{Mosayebi2014} where the dependence of the average number of noise molecules, i.e., the parameter of the distribution, on the time-slot duration is taken into account. 
However, these conventional noise models which are homogeneous in space and time are not capable of accurately modeling the noise introduced by biological entities, which is the main focus of this paper. 

 Authors in \cite{a5} analyze the expected number of molecules observed at the transparent receiver originated from the external noise sources, e.g., multiuser interference caused by the transmitters of other communication links, unintended leakage from vesicles, or the output from an unrelated biochemical process. In this work, the statistics of the emission process of the external noise source is neglected by assuming a uniform emission process and the proposed analysis focuses on the expected impact and not the complete probability density function of the impact of external noise source.


\subsection{Proposed Model}

Considering a biological external noise source releasing random numbers of molecules in random time instances leads to receiver noise whose statistics differ from that of the conventional noise models. In other words, conventional noise models are not able to model the noise caused by the biological external noise source. 
The release process of the biological external noise source is particularly modeled as a compound Poisson process (CPP) where the release time events constitute  points of a Poisson process and the amplitudes of  the events (the number of released molecules at a release time event) are random. This model is inspired from the molecule release processes observed in some endocrine systems \cite{hormone}-\cite{Neuro2} and the release of neurotransmitters in synapses \cite{Synapse2} which is the fundamental process that drives information transfer between the neurons in the nervous system. Noteworthy, the CPP is unable to model all biological release processes. At least, Weibull renewal process is a more general model than the CPP, for the release processes in the endocrine systemes \cite{hormone}. 
A biological external noise source whose release process is a CPP is referred to as a compound Poisson noise source (CPNS) in the following. The CPNS takes into account the randomness of the release events of biological noise source in both time and amplitude.  In this paper, we provide a framework for analyzing the performance of DMC systems in the presence of the CPNS. 

A point-to-point DMC system is considered in the presence of a CPNS which releases molecules of the same type as the signaling molecules. 
To investigate the impact of  CPNS on communication performance, a point-to-point DMC system with simple on-off keying modulation is considered. The number of noise molecules observed at the receiver is analyzed and approximated by a Poisson mixture distribution by using the rare-event property of Poisson distributions. Considering a simple on-off keying modulation, the performance of DMC system in the presence of CPNS is analyzed. A symbol-by-symbol maximum likelihood (ML) detector is derived and bit error rate (BER) is obtained by adopting a simple single-threshold detector (STD). For the special case of CPNS in high-rate regime (high rates of release time events), the noise received from the CPNS is approximated as a Poisson distribution whose mean is  normally  distributed. It is proved that the noise received from the CPNS in high-rate regime has a log-concave distribution and a STD is  the  optimal ML detector. For the general case of CPNS, our results report that the distribution of noise received from the CPNS may not be log-concave and can even be multimodal distribution leading to optimality of multiple-threshold detector (not a simple STD).   
Moreover, our particle based simulation (PBS) results confirm the obtained analytical BER expressions. Also, it is revealed that the conventional homogeneous Poisson noise model is not applicable to CPNSs and leads to overly optimistic performance estimates.


The remainder of this paper is organized as follows: In Section \ref{section2}, we present the DMC system model including the  transmitter, receiver, channel, and  CPNS models. In Section \ref{section3}, the distributions of the received signals due to the release of molecules by the transmitter and the CPNS are derived. The optimal ML detector and the error probability of the DMC system in the presence of a CPNS are analyzed in  Section \ref{section4}. In Section \ref{section5}, we provide simulation and numerical results. Finally, the paper is concluded in Section \ref{section6}.

\section{System Model} \label{system model} \label{section2}

A point-to-point DMC system is considered in the presence of a CPNS. Considering the main focus of this paper which is to characterize the effect of CPNS on the DMC, we adopt simplifying assumptions on DMC system and environment. We assume a 3-dimentioanl unbounded environment where the CPNS and transmitter are point sources and the receiver is transparent \cite{a5}. 
 It is assumed that the receiver is at the origin of the coordinate system and is synchronized with the transmitter \cite{syncArj}. The transmitter and the CPNS
 are  located at distances of $d_T$ and $d_C$ from the receiver, respectively; see Fig. \ref{fig:system model}. The signaling molecules and the noise molecules released by the transmitter and the CPNS, respectively, are of the same type  $A$. Simple on-off keying modulation with time-slot duration $T$ is adopted where bits 1 and 0 are represented by the release of $N$ molecules (on average) and no molecule at the beginning of  each time slot, respectively. Assuming  transmission of bit 1, the number of molecules released by the transmitter follows  a Poisson distribution with mean $N$ \cite{Arjmandi2013}. The receiver is assumed to be a transparent spherical volume of radius $r_R$ that counts the number of molecules inside the receiver volume  at sampling time $t_s$ \cite{schober2015}. The receiver uses the observed sample to decide about the  transmitted bit. In the rest of this section, the  CPNS model is presented and the adopted channel model is described.

\subsection{CPNS Model}

\begin{figure}
	\center
	\includegraphics[width=10 cm,height=6 cm]{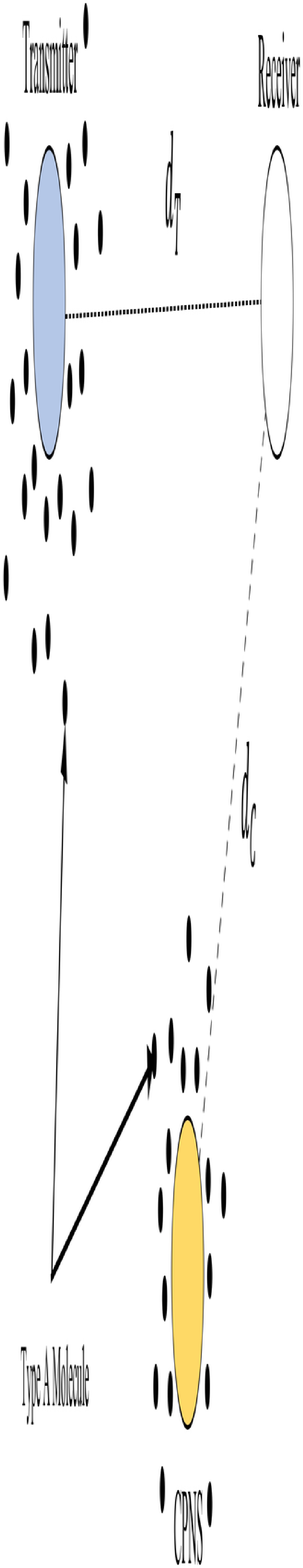}
	\setlength{\abovecaptionskip}{0 cm}
	\caption{Point-to-point DMC system in the presence of a CPNS.}
	\label{fig:system model}
\end{figure}


In this subsection, we propose a CPP model for the release of molecules from an external bio-inspired noise source. A CPP or space-time Poisson process is defined as follows \cite{Hanson}:

\newtheorem{definition}{Definition}
\begin{definition}
	Let $N(t)$ denote a Poisson process characterized by RV $n_e(t,u_t)$ representing the event arrival rate of distribution family $u_t \in U$ at time $t$. Also, let RV $Q(t,w_t)$ denote the time-dependent amplitudes due to a corresponding event at time $t$ from the distribution family $w_t \in W$. Then, a general CPP is defined as the sum of the event amplitudes up to time $t$ and is given by
	\begin{equation}
	\Pi(t)=\int_{0}^{t}Q(\tau,w_{\tau}) n_e(\tau,u_{\tau})d\tau.
	\end{equation}
	
\end{definition}
The release processes observed in secretory bursts of in some endocrine systems \cite{hormone}-\cite{Neuro2} and the release of neurotransmitters in synapses \cite{Synapse2} coincides to CPPs. 
In particular, the authors in \cite{Neuro}-\cite{Neuro2} model the release process of secretory bursts by superimposing the random burst amplitudes on a Poisson process representing the timing of the secretory burst events. The authors' aim is to provide a mathematical model for neurohormone secretion for physiological investigations. Considering Definition 1, this model for secretory bursts constitutes a CPP, regardless of the release amplitude distribution. 
The release of neurotransmitters in synapses which is the fundamental process that drives information transfer between the neurons in the nervous system has been modeled as a doubly stochastic Poisson process in \cite{Synapse2}. A doubly Poisson process is a Poisson process with a random arrival rate. Hence, according to Definition 1, regardless of the distribution of the release amplitudes, doubly Poisson process is a CPP. 

Obviously, the CPP model is  incomprehensive and lacks  to model all biological release processes in body. For instance, Weibull renewal process is a more general model than the CPP, for the release processes in the endocrine systemes \cite{hormone}. More accurately, the Weibull renewal process could consider a range of release processes from the uncorrelated to the fully-correlated time intervals between the release events. However,  the Poisson process is analytically more tractable  because of its specific characteristics, e.g., thinning and memoryless properties. Besides, this specific model could give insightful ideas about more general models.   
 	\definecolor{alizarin}{rgb}{0.82, 0.1, 0.26}
%
%
%
Thereby, in this paper, a CPP model for an external bio-inspired noise source in MC is adopted. Such an external noise source is referred to as a CPNS. 

The resulting CPNS models the randomness of the molecule release process both in time and amplitude. In particular, the molecule release times of the CPNS are  modeled as a Poisson point process with rate $n_e(t,u_t)$ and the number of molecules released at a release  event, $Q(t,w_t)$, is also a RV, which may follow some time-dependent distribution. 

We assume that the CPNS follows a special CPP where the Poisson point process representing the release times has a fixed rate of $\lambda_e$.  This simplifying assumption is justified based on the slow variation of biological processes modulating the release rate of CPNS compared to the transmission time slot duration of DMC systems (usually in the order of seconds).  Moreover, the release amplitude (the number of molecules released at the release time of the CPNS) is assumed to be a Poisson RV with parameter $\lambda_a$. This assumption is confirmed in \cite{Pump} where the authors show that the number of molecules released by ion channels and ion pumps over the cell membrane are Poisson distributed.  For better perception of the logic behind this assumption, consider a chamber including large number of molecules. If each molecule has a small probability to exit during a time interval, the total number of exiting molecules is Poisson distributed RV with mean of the average exiting molecules during the time interval.
In summary, we assume a CPNS with the Poisson point process representing the release times of the fixed rate of $\lambda_e$ where the release event amplitudes are assumed mutually independent and identical Poisson RVs, independent from the  Poisson point process of the release time events.

\definecolor{alizarin}{rgb}{0.82, 0.1, 0.26}

   \color{black}




\subsection{Channel Model}
In the considered DMC system, the transmitter and the CPNS release molecules into the environment. 
The molecules diffuse following a Brownian motion and their movements are assumed to be independent of each other.
Given a molecule $A$ having diffusion coefficient $D$ is released in the described unbounded environment at the origin, $\mathbf{r}=(0,0,0)$ and  at time $t=0$, the probability that the released molecule is observed by a transparent spherical receiver  with volume $V_R=\frac{4}{3}\pi r_R^3$, whose center is at a distance of ${r}$  from the source, can be approximated as\cite{Mahfuz2014}:
\begin{equation}\label{channelimpulse}
\begin{aligned}
p(t)= \frac{V_R}{(4\pi Dt)^{3/2}}\exp\Big(-\frac{{r}^2}{4Dt}\Big)u(t).
\end{aligned}
\end{equation}
It is obvious from \eqref{channelimpulse} that the DMC channel has memory, i.e., a molecule released at the beginning of the current time slot may not be observed at the receiver in the current time slot but may be observed in one of the next time slots.  Theoretically, the DMC channel has infinite memory, since $p(t)$ given in \eqref{channelimpulse} has  an infinite tail. However, from a practical perspective, a finite channel memory can be assumed \cite{memory}. To this end, we define the channel memory as the time it takes until a released molecule arrives at the receiver with a high probability  which is denoted by $\rho$ in this paper, i.e., we have
\begin{equation}
\begin{aligned}
\int_{\tau=0}^{t_m}p(\tau)d\tau=\rho \int_{\tau=0}^{\infty}p(\tau)d\tau,
\end{aligned}
\end{equation}
where $t_m$ denotes the channel memory in seconds. \color{black} Correspondingly,  $k=\lfloor t_m/T\rfloor$  is the channel memory in terms of the number of time slots where $\lfloor x \rfloor$ denotes the largest integer  less than or equal to $x$. 
 The memories of the transmitter-to-receiver channel and the CPNS-to-receiver channel that depend on $d_T$ and $d_C$, respectively, are denoted by $k_T$ and $k_C$, respectively.

\begin{remark}
We have adopted simply an unbounded environment with point source CPNS and transparent receiver for evaluation of DMC in the presence of the CPNS in simulation and numerical results Section. However, our proposed analysis in the rest of paper can be simply generalized for a diffusion channel in which the diffusing molecules are exposed to boundaries of biological entity, the protein receptors over the receiver surface, and/or degradation reactions in the environment. More accurately, in our analysis, the time probability densities of receiving a molecule released from the transmitter and CPNS at the receiver, i.e., $p_T(t)$ and $p_C(t)$ are parameters which are adopted based on the considered system model.   
\end{remark}

\section{Received Signal at the Receiver} \label{section3}

In order to investigate the performance of the considered DMC system, the received signal has to be characterized. In other words, the  distribution of the number of molecules observed at the receiver at  sampling time $t_s$ has to be obtained. The molecules observed at the receiver originate from two independent sources, namely the transmitter and the CPNS. In this section, we derive the distributions of  the numbers of molecules received from the transmitter and the  CPNS, respectively. Specifically, the rare-event property of the Poisson process is employed to obtain a simplified closed-form expression for distribution of the noise received from CPNS. Also, for the special case of CPNS in high-rate regime, the noise received from the CPNS is approximated by a Poisson process whose rate is normal distributed.

\subsection{Signal Received from Transmitter}

Let $B_0 \in \{0,1\}$ and $B_j \in \{0,1\}, j=1,2,\cdots,k_T$, denote the RVs representing the bits transmitted in the current time slot and the $j^{th}$ previous time slot, respectively. Based on the  system model described in Section \ref{system model}, to transmit bit $B_j$, the transmitter releases $X_j$ molecules at the beginning of the $j^{th}$ time slot where $X_j|B_j=b_j \sim \mathrm{Poisson}(b_j N)$. In other words, if $B_j=0$, no molecule is released and if $B_j=1$, the number of released molecules is Poisson distributed with parameter $N$. A molecule released at the beginning of the $j^{th}$, $j=0,1,\cdots,k_T$, time slot is observed at the receiver at sampling time $t_s$ of the current time slot with probability  $p_T(jT+t_s)$, where $p_T(t)$ is given in \eqref{channelimpulse} after substituting $r$ by $d_T$. Based on the thinning property of the Poisson distribution\cite{Thinning}, the number of molecules received at the receiver due to  transmission of $B_0=b_0$ in the current time slot, $Y_T^c$, is Poisson distributed with mean $N b_0 p_T(t_s)$, i.e.,
\begin{equation}
\begin{aligned}
p_{Y_T^c}[k|B_0=b_0]=\mathrm{exp}\bigg(-Nb_0p_T(t_s)\bigg)\frac{\Big(Nb_0p_T(t_s)\Big)^k}{k!}, \hspace{1 cm}  k=0,1,2,\cdots.
\end{aligned}
\end{equation}
Similarly, the number of molecules observed at the receiver  in the current time slot due to  transmission of $B_j=b_j$ in the $j^{th}$ previous time slot is Poisson distributed with parameter $N b_j p_T(jT+t_s)$, i.e., $Y_T^j|B_j=b_j \sim \mathrm{Poisson}\Big(N b_j p_T(jT+t_s)\Big)$. The number of molecules observed at the receiver in the current time slot due to  transmission of all previous bits (interference) equals  $Y_T^I=\sum_{j=1}^{k_T}Y_T^j$. Given the previous transmitted bits $B_j=b_j, j=1,2,\cdots,k_T$, the $Y_T^j$, $j=1,\cdots,k_T$, are independent and  $Y_T^I$ follows a Poisson distribution with mean $N\sum_{j=1}^{k_T}b_j p_T(jT+t_s)$, i.e., we have:
\small
\begin{equation}
\begin{aligned}
p_{Y_T^I}[k|\textbf{B}_{1:k_T}=\textbf{b}_{1:k_T}]=
\exp\bigg(-N\sum_{j=1}^{k_T}b_j p_T(jT+t_s)\bigg)\frac{\Big(N\sum_{j=1}^{k_T}b_j p_T(jT+t_s)\Big)^k}{k!}, \hspace{1 cm}  k=0,1,2,\cdots,
\end{aligned}
\end{equation}
\normalsize
where $\textbf{B}_{1:k_T}=[B_1, B_2, \cdots, B_{k_T}]$, $\textbf{b}_{1:k_T}=[b_1, b_2, \cdots, b_{k_T}]$, and the notation $\textbf{A}_{i:k}$  denotes  vector $[A_i,A_{i+1},\cdots,A_k]$.

\subsection{Noise Received from CPNS}\label{RSANS}


\begin{figure}
	\center
	\includegraphics[width=12 cm,height=4 cm]{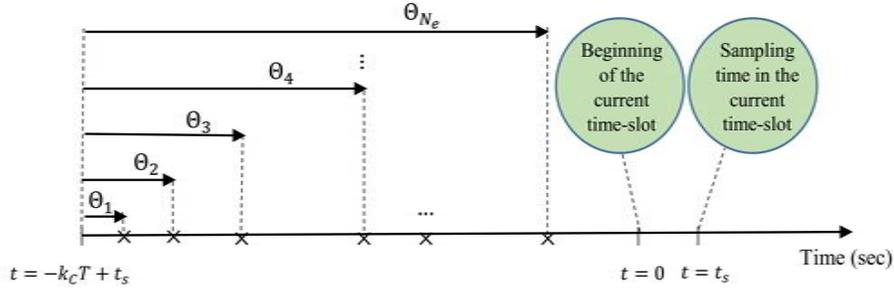}
	\setlength{\abovecaptionskip}{0 cm}
	\caption{A schematic illustration of  Poisson point process over time. The `$\times$' signs over the time axes represent the instances of release events.} 
	\label{fig:events}
\end{figure}

%

The  memory of the channel between the CPNS and the receiver is $k_C$ time slots. Considering the CPNS Poisson point process of release events with rate $\lambda_e$, the number of release events in the $k_C$ previous time slots, $k_C T$, denoted by $N_e(k_CT)$, is a Poisson RV with mean $\lambda_e k_C T$. For  ease of notation, in the rest of the paper, we denote $N_e(k_CT)$ by $N_e$.  The time elapsed since the time instant $-k_CT+t_s$ until the $i^{th}$ release event is denoted by $\Theta_i$; see Fig. \ref{fig:events}. Hence, the $i^{th}$ release event occurs at time $-k_CT+t_s+\Theta_i$. 


Given the release of a molecule by the CPNS at time 0, the probability of observing this molecule at the receiver is $p_C(t)$, given by \eqref{channelimpulse} after substituting  $r$ by $d_C$. The release amplitude of the $i^{th}$ event is Poisson distributed with parameter $\lambda_a$. Therefore, based on the thinning property of the Poisson distribution, the number of molecules observed at the receiver due to the $i^{th}$ release event of the CPNS, $Y_C^i$, is Poisson distributed with mean $\lambda_a p_C(k_CT-\theta_i)$ for $\Theta_i=\theta_i$, i.e.,
\begin{equation}
\begin{aligned}
Y_C^i|\Theta_i=\theta_i \sim \mathrm{Poisson}\Big(\lambda_a p_C(k_CT-\theta_i)\Big).
\end{aligned}
\end{equation}

The total number of molecules observed at the receiver in the current time slot (at time $t=t_s$) due to the molecule release by the CPNS during the $k_C$ previous time slots is equal to $Y_C=\sum_{i=1}^{N_e}Y_C^i$. Assuming $N_e=n$, i.e., the number of release events in the $k_C$ previous time slots is equal to $n$ and $\Theta_{1:n}=\theta_{1:n}$, $Y_C$ follows a Poisson distribution with parameter $\lambda_a \sum_{i=1}^{n}p_C(k_CT-\theta_i)$. Therefore, we can write
\small
\begin{equation}\label{PYIeq}
\begin{aligned}
p_{Y_C}[k]&=\sum_{n=0}^{\infty}\int_{\theta_{1:n}}p_{Y_C}[k|N_e=n, \Theta_{1:n}=\theta_{1:n}]f_{\Theta_{1:n}}(\theta_{1:n}|N_e=n)p_{N_e}[n]d\theta_{1:n} \hspace{2.5 cm}\\
&=\sum_{n=0}^{\infty}\int_{{\theta}_{1:n}}\mathrm{exp}\big(-\lambda_a \sum_{i=1}^{n}p_C(k_CT-{\theta}_i)\big)\frac{\Big(\lambda_a \sum_{i=1}^{n}p_C(k_CT-{\theta}_i)\Big)^k}{k!}f_{{\Theta}_{1:n}}({\theta}_{1:n}|N_e=n)p_{N_e}[n]d{\theta}_{1:n},
\end{aligned}
\end{equation}
\normalsize
where $p_{Y_C}[ \cdot ]$ denotes the distribution of RV $Y_C$, the ${\Theta}_i$s are points of a Poisson process,  $N_e$ is the RV representing the total number of release events occurring during $[-k_C T+t_s,t_s]$, respectively. Also,  $f_{{\Theta}_{1:n}}({\theta}_{1:n}|N_e=n)$ is the joint pdf of release time events $\Theta_{1:n}$ given $n$ events, and $d{\theta}_{1:n}$ stands for $d{\theta}_1 d{\theta}_2 \cdots d{\theta}_n$. Since the points of a Poisson process, $\Theta_{1:n}$,  form a Markov chain\cite{markov}, we have:
\begin{equation}\label{markoveq}
\begin{aligned}
f_{{\Theta}_{1:n}}({\theta}_{1:n}|N_e=n)&=\prod_{i=1}^{n}f_{{\Theta}_i}({\theta}_i|{\theta}_{i-1},N_e=n) \hspace{6.2 cm}\\
&=f_{{\Theta}_1}({\theta}_1|N_e=n)f_{{\Theta}_2}({\theta}_2|{\theta}_{1},N_e=n) \cdots f_{{\Theta}_n}({\theta}_n|{\theta}_{n-1},N_e=n).
\end{aligned}
\end{equation}

The time difference between two events of a Poisson process is exponentially distributed, which leads to:
\begin{equation}
\begin{aligned}
f_{{\Theta}_i}({\theta}_i|{\theta}_{i-1},N_e=n)=\lambda_e e^{-\lambda_e ({\theta}_i-{\theta}_{i-1})}u({\theta}_i-{\theta}_{i-1}),
\end{aligned}
\end{equation}
where $u(\cdot)$ denotes the unit step function. Therefore,  \eqref{markoveq} reduces to
\begin{equation}\label{markoveq2}
\begin{aligned}
f_{{\Theta}_{1:n}}({\theta}_{1:n}|N_e=n)=\lambda_e^n e^{-\lambda_e {\theta}_n} u(\theta_n-\theta_{n-1}).
\end{aligned}
\end{equation}

Substituting $p_{N_e}[n]=e^{-\lambda_ek_CT}{(\lambda_ek_CT)^n}/{n!}$ and applying \eqref{markoveq2} in \eqref{PYIeq}, we have:
\small
\begin{equation} \label{cumbersome}
\begin{aligned}
&p_{Y_C}[k]\\
&=\sum_{n=0}^{\infty}\int_{{\theta}_{1:n}}\exp \bigg(-\lambda_a \sum_{i=1}^{n}p_C(k_CT-{\theta}_i)\bigg)\frac{\Big(\lambda_a \sum_{i=1}^{n}p_C(k_CT-{\theta}_i)\Big)^k}{k!}\lambda_e^n e^{-\lambda_e {\theta}_n}\exp(-\lambda_ek_CT)\frac{(\lambda_ek_CT)^n}{n!}d{\theta}_{1:n}.
\end{aligned}
\end{equation}
\normalsize
Generally, obtaining a closed-form expression for the above integral is cumbersome. 
In the next subsection, the rare-event property \cite{rare event2} of the Poisson process is employed to obtain a simplified closed-form expression for $p_{Y_C}[k]$.  


\subsection{ Rare Event Based  Analysis of the Noise Received from the CPNS}\label{raresubsection}


\color{black}

The rare event property of a Poisson process with parameter $\lambda$ states that the probability of occurrence of an event in a short time interval $\Delta t$ ($\Delta t \ll 1/{\lambda}$) is proportional to the duration of the interval, i.e., $\lambda \Delta t$. Also, for sufficiently small $\Delta t$, the probability of occurrence of more than one event is negligible. As a result, the probability of no event occurring in this interval is equal to $1-\lambda \Delta t$ \cite{Thinning}, \cite{rare event2}.

For a sufficiently short time interval $\tilde{T}$, such that $\lambda_e \tilde{T}\ll 1 $, the rare event property holds. 
Given $\tilde{T}$, the channel memory duration, $k_CT$, can be divided into $\tilde{k}_C={\frac{k_CT}{\tilde{T}}}$ distinct time intervals of length $\tilde{T}$. Let $\tilde{Y}_C^i$ denote the number of molecules received in the current time slot due to the release event of the CPNS  in the $i^{th}$, $i=1,\cdots,\tilde{k}_C$,  previous short time interval, $[-(\tilde{k}_C-i+1)\tilde{T}+t_s,-(\tilde{k}_C-i)\tilde{T}+t_s]$. Therefore, the total number of molecules received from the CPNS in the current time slot  is $Y_C=\sum_{i=1}^{\tilde{k}_C}\tilde{Y}_C^i$. Because of the independence of the release time instants in  distinct intervals for a Poisson process, the $\tilde{Y}_C^i$ are mutually independent, and we have
\normalsize
\begin{equation} \label{conv1}
\begin{aligned}
p_{Y_C}[k]=p_{\tilde{Y}_C^1}[k]\otimes p_{\tilde{Y}_C^2}[k] \otimes  \cdots \otimes p_{\tilde{Y}_C^{\tilde{k}_C}}[k],
\end{aligned}
\end{equation}
where $\otimes$ is the convolution operator and $p_{\tilde{Y}_C^i}[k]$ denotes the distribution of the number of molecules received in the current time slot due to the release event of the CPNS  in the $i^{th}$, $i=1,\cdots,\tilde{k}_C$,  previous short time interval.

In the $i^{th}$  previous short time interval,  the probability of releasing no molecules is  $1-\lambda_e\tilde{T}$. If no molecule is released, which we refer to as event $\mathcal{F}_0^i$, no molecule is observed at the receiver, i.e., $p_{\tilde{Y}_C^i}[k|\mathcal{F}_0^i]=\delta[k]$. Otherwise,  one event occurs in the $i^{th}$ short time interval at time $t_i$, $t_i\in [-(\tilde{k}_C-i+1)\tilde{T}+t_s,-(\tilde{k}_C-i)\tilde{T}+t_s]$, which can be modeled as a uniform RV, since the time interval is short  and includes only one occurrence. Defining $\tilde{\Theta}_i=t_i+\tilde{k}_C \tilde{T}-t_s$ (the time elapsed since time instant $-\tilde{k}_C \tilde{T}+t_s$ until the release event at $t_i$), $\tilde{\Theta}_i$ is a uniform RV in  interval $[(i-1)\tilde{T},i\tilde{T}]$, correspondingly. Thereby, given one event occurrence, $\mathcal{F}_1^i$, and the occurrence time $\tilde{\Theta}_i=\tilde{\theta}_i$, the number of molecules received in the current time slot follows a Poisson distribution with parameter $\mu(\tilde{\Theta}_i)=\lambda_a p_C(k_C T-\tilde{\Theta}_i)$, i.e.,
\begin{equation}\label{release}
\begin{aligned}
p_{\tilde{Y}_C^i}[k|\mathcal{F}_1^i]=\int_{(i-1)\tilde{T}}^{i\tilde{T}} p_{\tilde{Y}_C^i}[k|\mathcal{F}_1^i, \tilde{\theta}_i] \frac{1}{\tilde{T}} d\tilde{\theta}_i=\int_{(i-1)\tilde{T}}^{i\tilde{T}}e^{-\mu(\tilde{\theta}_i)}\frac{\Big(\mu(\tilde{\theta}_i)\Big)^k}{k!}\frac{1}{\tilde{T}} d\tilde{\theta}_i,  \;\; \; \; \; \; i=1,2,...,\tilde{k}_C.
\end{aligned}
\end{equation}
\normalsize
Therefore, we obtain
\begin{equation}\label{16}
\begin{aligned}
p_{\tilde{Y}_C^i}[k]=p_{\tilde{Y}_C^i}[k|\mathcal{F}_0^i]p(\mathcal{F}_0^i)&+p_{\tilde{Y}_C^i}[k|\mathcal{F}_1^i]p(\mathcal{F}_1^i)\\
&=(1-\lambda_e \tilde{T})\delta[k]+(\lambda_e \tilde{T})p_{\tilde{Y}_C^i}[k|\mathcal{F}_1^i], \;\; \; \; \; \; i=1,2, \cdots,\tilde{k}_C,
\end{aligned}
\end{equation}
where $p_{\tilde{Y}_C^i}[k|\mathcal{F}_1^i]$ is given by \eqref{release}. To simplify the notation, we employ $f_i[k]=p_{\tilde{Y}_C^i}[k|\mathcal{F}_1^i]$ in the rest of the paper. Eq. \eqref{conv1} can be simplified as follows, see  Appendix \ref{App-A},
\begin{equation} \label{conv}
\begin{aligned}
p_{Y_C}[k]=\sum_{i=0}^{\tilde{k}_C}(1-\lambda_e \tilde{T})^{\tilde{k}_C-i}(\lambda_e \tilde{T})^i \sum_{h=1}^{\mathfrak{K}_i}  \delta[k] \otimes {f}_{\alpha_1^h}[k] \otimes \cdots \otimes {f}_{\alpha_i^h}[k],
\end{aligned}
\end{equation}
where  $\mathfrak{K}_i \overset{\Delta}{=}\left(\begin{array}{c}\tilde{k}_C\\ i\end{array}\right)$ is the number of $i$-element subsets of  set $\{1,2, \cdots,\tilde{k}_C\}$, and the elements of the $h^{th}$ $i$-element subset are denoted by $ \alpha_1^h, \cdots, \alpha_{i}^h$. The distribution in \eqref{conv} is complicated and does not have a closed form expression when the exact distribution of $p_{\tilde{Y}_C^i}[k]$ given in \eqref{release} is employed which is referred to as rare-event exact distribution for the noise received from the CPNS. To obtain a closed form expression, we approximate $p_{\tilde{Y}_C^i}[k|\mathcal{F}_1^i]$ as a Poisson distribution with a fixed mean. Since $\tilde{\Theta}_i$ is a uniform RV in  interval $[(i-1)\tilde{T},i\tilde{T}]$, by adopting sufficiently small $\tilde{T}$ ($\tilde{T}\ll k_C T$), $k_CT-(i-1)\tilde{T}$, very closely approximates $k_C T-\tilde{\Theta}_i$. Therefore, we can approximate the mean $\mu(\tilde{\Theta}_i)$ as follows
\begin{equation}\label{app2}
\begin{aligned}
\mu(\tilde{\Theta}_i)=\lambda_a p_C(k_C T-\tilde{\Theta}_i) \simeq  \lambda_a p_C(k_CT-(i-1)\tilde{T})\overset{\Delta}{=} \mu_i, \;\; \; \; \; \; i=1,2, \cdots,\tilde{k}_C,
\end{aligned}
\end{equation}
which leads to ${f}_i[k] = p_{\tilde{Y}_C^i}[k|\mathcal{F}_1^i\color{black}] \cong e^{-\mu_i}\frac{(\mu_i)^k}{k!}$. Thereby, the convolution term in \eqref{conv} is reduced to
\begin{equation}
\begin{aligned}
{f}_{\alpha_0^h}[k] \otimes {f}_{\alpha_1^h}[k] \otimes \cdots \otimes {f}_{\alpha_i^h}[k] \cong \exp \bigg(- \sum_{l=0}^{i}\mu_{\alpha_l^h}\bigg)  \frac{( \sum_{l=0}^{i}\mu_{\alpha_l^h})^k}{k!},
\end{aligned}
\end{equation}
and hence
\begin{equation}\label{PM noise}
\begin{aligned}
p_{Y_C}[k]=\sum_{i=0}^{\tilde{k}_C}(1-\lambda_e \tilde{T})^{\tilde{k}_C-i}(\lambda_e \tilde{T})^i \sum_{h=1}^{\mathfrak{K}_i} \exp \bigg(- \sum_{l=0}^{i}\mu_{\alpha_l^h}\bigg) \frac{( \sum_{l=0}^{i}\mu_{\alpha_l^h})^k}{k!},
\end{aligned}
\end{equation}
which is a Poisson mixture distribution\cite{Poisson Mixture2}, i.e., $p_{Y_C}[k]$ is a summation of weighted Poisson distributions where the sum of the weights is equal to 1, since
\begin{equation}
\begin{aligned}
\sum_{i=0}^{\tilde{k}_C}(1-\lambda_e \tilde{T})^{\tilde{k}_C-i}(\lambda_e \tilde{T})^i\mathfrak{K}_i=1.
\end{aligned}
\end{equation}
We refer to the approximation in \eqref{PM noise} as rare-event approximate distribution for noise received from the CPNS. Our simulation results demonstrate that the rare-event approximate analysis very closely approaches the rare-event exact analysis in \eqref{conv} for small values of $\tilde{T}$.    

\color{black}


%


\color{black}

\subsection{ Noise Received from the CPNS in the High-rate Regime}\label{III.D}

From  Subsection \ref{RSANS}, the number of molecules received from the CPNS at the current time slot follows a Poisson distribution with random rate  $M=\lambda_a \sum_{i=1}^{N_e}p_C(k_CT-{\Theta}_i)$ where $N_e$ is a RV denoting the number of release events during the $k_C$ previous time slots and $\Theta_i$ denotes a RV representing the release time of the $i^{th}$ event with respect to the beginning of the $k_C^{th}$ previous time slot. Defining stochastic process $\textbf{M}(t)=\sum_{i=1}^{N_e}\lambda_a p_C(t-{\Theta}_i)$,  we can write $M=\textbf{M}(k_CT)$. The $\textbf{M}(t)$ can be interpreted as a shot-noise process passed over a linear time invariant (LTI) system with impulse response $\lambda_a p_C(t)$ \cite{NEW2}-\cite{NEW3}, i.e.,
\begin{equation} \label{fpg}
\textbf{M}(t)=\sum_{i=-\infty}^{+\infty}\delta(t-\Theta_i)\otimes \lambda_a p_C(t),
\end{equation}
where $p_C(t)$ is given in \eqref{channelimpulse} and ${\Theta}_i$s are points of a Poisson process with rate $\lambda_e$\footnote{Since we have $p_C(t)=0$ for $t<0$ and $t>k_CT$, inclusion of $\Theta_i<0$ or $\Theta_i>k_CT$ in the summation of $\textbf{M}(t)=\sum_{i=1}^{N_e}\lambda_a p_C(t-{\Theta}_i)$ is allowed and equivalently we can write $\textbf{M}(t)=\sum_{i=-\infty}^{+\infty} \lambda_a p_C(t-\Theta_i)$}. The  cumulants  of a shot-noise process passing from an LTI system with impulse response $h(t)$ are time invariant which are given by \cite{NEW3}:
\begin{equation} \label{kkn}
\begin{aligned}
k_n=\lambda_e \int_{-\infty}^{+\infty}h^n(\tau) d\tau,\;\;\;\; \forall\; n\geq1.
\end{aligned}
\end{equation}

Thereby, this process is a first order strict sense stationary (SSS) process whose first order distribution function is time independent. The authors in \cite{NEW2}  show that for high values of $\lambda_e$ ($\lambda_e\to \infty$), the first order distribution of this process approaches a Gaussian distribution with mean $k_1$ and variance $k_2$ given in \eqref{kkn}. Considering $h(t)=\lambda_ap_C(t)$ where $p_C(t)$ is given in \eqref{channelimpulse}, the cummulants of $\textbf{M}(t)$ are obtained as follows
\begin{equation} 
\begin{aligned} \label{k_n2}
k_n=\frac{1}{n}G_1G_2^n \Gamma\big(\frac{3n}{2}-1,\frac{nd_C^2}{4Dk_CT}\big),\;\;\forall n\geq 1,
\end{aligned}
\end{equation}
where $G_1=\frac{\lambda_e d^2}{4D}$, $G_2=\frac{\lambda_a V_R}{\pi^{3/2}d_C^3}$, and $\Gamma(s,x)=\int_x^{\infty}t^{s-1}e^{-t}dt$ denotes the upper incomplete Gamma function. Therefore, for CPNS in a high-rate regime (large values of $\lambda_e$), ($\lambda_e\to \infty$), $M$ follows a Gaussian distribution with mean $k_1$ and variance $k_2$ given in \eqref{k_n2}, i.e.,
\begin{equation} 
\begin{aligned} \label{gauus}
f_M(m)= (2\pi k_2)^{1/2}\mathrm{exp}\big(-\frac{(m-k_1)^2}{2k_2}\big),
\end{aligned}
\end{equation}
As a result, the number of molecules received from the CPNS in high rate regime, $Y_C$, follows a Poisson distribution with parameter $M\sim \mathcal{N}(k_1,k_2)$, and we can write:
\begin{equation} 
\begin{aligned} \label{abc}
{p}_{Y_C}[k]=\int_{0}^{+\infty} {p}_{Y_C}[k|m]f_M(m)dm.
\end{aligned}
\end{equation}
In Appendix B, we obtain the following closed form expression for $p_{Y_C}[k]$
\begin{equation}  
\begin{aligned} \label{distinf}
{p}_{Y_C}[k]={l}{(k_1,k_2)} k_2^{(k+1)/2}D_{-k-1}\bigg(\sqrt{k_2}(1-\frac{k_1}{k_2})\bigg),
\end{aligned}	
\end{equation}
where 
\begin{equation}  
\begin{aligned}
l{(k_1,k_2)}\overset{\Delta}{=} (2\pi k_2)^{-1/2}\exp(-k_1^2/{2k_2}+k_2(1-\frac{k_1}{k_2})^2/4),
\end{aligned}	
\end{equation}
and $D_{\nu}(z)$ is the parabolic cylinder function which is defined as follows
\begin{equation}  \label{parabolic} 
\begin{aligned}
D_{\nu}(z)\overset{\Delta}{=} 2^{\nu/2}e^{-z^2/4}\bigg[\frac{\sqrt{\pi}}{\Gamma(\frac{1-\nu}{2})}   {_{1}F_1}(-\frac{\nu}{2},\frac{1}{2};\frac{z^2}{2})-\frac{\sqrt{2\pi}z}{\Gamma(-\frac{\nu}{2})}\ _{1}F_1(\frac{1-\nu}{2},\frac{3}{2};\frac{z^2}{2})\bigg],
\end{aligned}	
\end{equation}
in which $_{1}F_1(\alpha,\gamma;z)$ denotes the confluent hypergeometric function as follows:
\begin{equation}  
\begin{aligned} 
_{1}F_1(\alpha,\gamma;z)=\sum_{n=0}^{+\infty}\frac{\Gamma(\alpha+n)\Gamma(\gamma)}{\Gamma(\alpha)\Gamma(\gamma+n)}\frac{z^n}{n!}.
\end{aligned}	
\end{equation}
where $\Gamma(.)$ denotes the Gamma function.

\section{Error Probability Analysis}\label{section4}



In Section \ref{section3}, the  signal received from the transmitter, $Y_T$, and  the noise received from the CPNS, $Y_C$, were analyzed. In this section, we analyze the performance of a point-to-point DMC link in the presence of a CPNS in terms of the BER. The total received signal at the receiver is given by
\begin{equation}
\begin{aligned}
Y=Y_T+Y_C=Y_T^c+Y_T^I+Y_C,
\end{aligned}
\end{equation}
where $Y_T^c$ is the  signal received in the current time slot due to the current transmission and $Y_T^I$ is the interference received in the current time slot originating from transmissions in previous time slots.    
$Y_T^c$ and $Y_T^I$ are independent Poisson-distributed RVs with parameters $Nb_0p_T(t_s)$ and $N\sum_{j=1}^{k_T}b_jp_T(jT+t_s)$, given $B_0=b_0$ and $\textbf{B}_{1:k_T}=\textbf{b}_{1:k_T}$, respectively.
 Thereby, $Y_T=Y_T^c+Y_T^I$ follows a Poisson distribution with parameter $\sum_{j=0}^{k_T}NB_jp_T(jT+t_s)$, i.e.,
\begin{equation}
\begin{aligned}
p_{Y_T}[k|\textbf{B}_{0:k_T}=\textbf{b}_{0:k_T}]
=\exp\Bigg(-\sum_{j=0}^{k_T}Nb_jp_T(jT+t_s)\Bigg)\frac{\Big(\sum_{j=0}^{k_T}Nb_jp_T(jT+t_s)\Big)^k}{k!}.
\end{aligned}
\end{equation} 
\normalsize

The noise received from the CPNS, $Y_C$,  is a Poisson mixture given by \eqref{PM noise}. Therefore, conditioned on current and previous transmitted bits, $\textbf{B}_{0:k_T}=\textbf{b}_{0:k_T}$,  the total number of molecules observed at the receiver also follows a Poisson mixture distribution as follows:
\begin{equation} \label{received signal}
\begin{aligned}
p_{Y}[k|\textbf{B}_{0:k_T}=\textbf{b}_{0:k_T}]=\sum_{i=0}^{\tilde{k}_C}(1-\lambda_e \tilde{T})^{\tilde{k}_C-i}(\lambda_e\tilde{T})^i \sum_{h=1}^{\mathfrak{K}_i} e^{-\upsilon_i^{h}(\textbf{b}_{0:k_T})} \frac{\Big(\upsilon_i^{h}(\textbf{b}_{0:k_T})\Big)^k}{k!},
\end{aligned}
\end{equation}
where $\upsilon_i^{h}(\textbf{b}_{0:k_T})=\sum_{j=0}^{k_T}Nb_jp_T(jT+t_s)+\sum_{l=0}^{i}\mu_{\alpha_l^h}$. 
Assuiming equiprobable input bits and receiving $Y=y$ molecules in the current time slot, a symbol-by-symbol maximum likelihood (ML) detector which has no information about the previously transmitted bits is given by  \cite{a1}-\cite{salehi}:
\begin{equation} \label{ML22}
\begin{aligned}
\hat{B}_0=\underset{b_0 \in \{0,1\}}{\operatorname{argmax}} \hspace{.25 cm} p_Y[y|{B}_0={b}_0],
\end{aligned}
\end{equation}
where
\begin{equation}\label{STDpe6} 
\begin{aligned}
p_Y[y|{B}_0={b}_0]= (\frac{1}{2})^{k_T}\sum_{\textbf{b}_{1:k_T}}p_Y[y|\textbf{B}_{0:k_T}=\textbf{b}_{0:k_T}],
\end{aligned}
\end{equation}
and $p_Y[y|\textbf{B}_{0:k_T}=\textbf{b}_{0:k_T}]$ is given in \eqref{received signal}.
Generally, the optimal ML detector \eqref{ML22} is a MTD which is characterized by $m$ threshold values,  $\zeta_1,\zeta_2,\cdots,\zeta_m$ partitioning feasible observation space ($y\in \mathbb{R}_{+}$) and the decisions on the transmitted bit based on the observed $y$ in all disjoint partitions determined by the threshold values. For nanomachines, which have limited resources, STDs ($m=1$) are desirable.  A  STD, denoted by $\Phi(y)$, is characterized as follows:
\begin{equation} 
\begin{aligned} \label{det}
\Phi(y)=\begin{cases}0 & y<\zeta\\1 & y\geq\zeta\end{cases},
\end{aligned}
\end{equation}
where $y$ is the observation and $\zeta$ is the decision threshold. 
Given a STD with threshold value  $\zeta$, the BER of the system is obtained as follows:
\begin{equation}\label{STDpe1} 
\begin{aligned}
\mathrm{P}_e=(\frac{1}{2})^{k_T} \sum_{\textbf{b}_{1:k_T}}\mathrm{Pr}(E|\textbf{B}_{1:k_T}=\textbf{b}_{1:k_T}),
\end{aligned}
\end{equation} 
in which $E$ is the error event ($\hat{B}_0\neq B_0$) and BER conditioned to the previous transmitted bits, $\mathrm{Pr}(E|\textbf{B}_{1:k_T}=\textbf{b}_{1:k_T})$, is given by:
\begin{equation}\label{STDpe3} 
\begin{aligned}
\mathrm{Pr}(E|\textbf{B}_{1:k_T}=\textbf{b}_{1:k_T})&=\frac{1}{2}\mathrm{Pr}(\hat{B}_0=1|B_0=0, \textbf{B}_{1:k_T}=\textbf{b}_{1:k_T})+ \frac{1}{2}\mathrm{Pr}(\hat{B}_0=0|B_0=1, \textbf{B}_{1:k_T}=\textbf{b}_{1:k_T})\\
&=\frac{1}{2}\mathrm{Pr}(y\geq \zeta|B_0=0, \textbf{B}_{1:k_T}=\textbf{b}_{1:k_T})+\frac{1}{2}\mathrm{Pr}(y< \zeta|B_0=1, \textbf{B}_{1:k_T}=\textbf{b}_{1:k_T})\\
&=\frac{1}{2}\Big(1-\mathcal{F}_Y(\zeta|B_0=0, \textbf{B}_{1:k_T}=\textbf{b}_{1:k_T})+\mathcal{F}_Y(\zeta|B_0=1, \textbf{B}_{1:k_T}=\textbf{b}_{1:k_T})\Big),
\end{aligned}
\end{equation} 
where $\mathcal{F}_Y(\cdot)$ denotes the cumulative distribution function (CDF) of RV $Y$.

Considering the distribution of received signal given by \eqref{received signal}, the BER terms $\mathrm{Pr}(\hat{B}_0=1|B_0=0,\textbf{B}_{1:k_T}=\textbf{b}_{1:k_T})$ and $\mathrm{Pr}(\hat{B}_0=0|B_0=1,\textbf{B}_{1:k_T}=\textbf{b}_{1:k_T})$  in \eqref{STDpe3} are calculated as follows:
\begin{equation}\label{P_e1}
\begin{aligned}
\mathrm{Pr}(\hat{B}_0=1|B_0&=0, \textbf{B}_{1:k_T}=\textbf{b}_{1:k_T})=\mathrm{Pr}(y>\zeta| B_0=0,\textbf{B}_{1:k_T}=\textbf{b}_{1:k_T}) \\
&= \sum_{i=0}^{\tilde{k}_C}(1-\lambda_e \tilde{T})^{\tilde{k}_C-i}(\lambda_e\tilde{T})^i \sum_{h=1}^{\mathfrak{K}_i} \Bigg(1-\frac{\Gamma\Big(\zeta,\upsilon_i^{h}(b_0=0,\textbf{b}_{1:k_T})\Big)}{\Gamma(\zeta)}\Bigg),\;\;\;\;\;\;\;\;
\end{aligned}
\end{equation}
\begin{equation}\label{P_e2}
\begin{aligned}
\mathrm{Pr}(\hat{B}_0=0|B_0&=1, \textbf{B}_{1:k_T}=\textbf{b}_{1:k_T})=\mathrm{Pr}(y \leq \zeta| B_0=1, \textbf{B}_{1:k_T}=\textbf{b}_{1:k_T}) \\
&=\sum_{i=0}^{\tilde{k}_C}(1-\lambda_e \tilde{T})^{\tilde{k}_C-i}(\lambda_e \tilde{T})^i \sum_{h=1}^{\mathfrak{K}_i}\frac{\Gamma\Big(\zeta,\upsilon_i^{h}(b_0=1,\textbf{b}_{1:k_T})\Big)}{\Gamma(\zeta)}, \hspace{2.8 cm}
\end{aligned}
\end{equation}
\normalsize
where $\Gamma (\delta,\sigma)$ is the incomplete Gamma function given by $\Gamma(\delta,\sigma)=\int_{\sigma}^{\infty}e^{-t}t^{\delta-1}dt$ and $\Gamma(\delta,\sigma)/\Gamma(\delta)$ denotes the CDF of the Poisson distribution with parameter $\sigma$.

\subsection{On the Optimality of Single-Threshold Detector}
In this subsection, we first prove that STD is optimal for CPNS in high-rate regime and then discuss on optimality of STD in general case. 


\newtheorem{theorem}{Theorem}
\begin{theorem}
For the CPNS in the high-rate regime (large values of $\lambda_e$), the optimal ML detector \eqref{ML22} is a single-threshold detector.
\end{theorem}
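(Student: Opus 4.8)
The plan is to prove that the two hypotheses induce a \emph{monotone likelihood ratio} (MLR) at the receiver, i.e.\ that $L(y):=p_Y[y\,|\,B_0=1]/p_Y[y\,|\,B_0=0]$ is non-decreasing in $y$. Once MLR is in hand, the region on which the ML rule \eqref{ML22} outputs $\hat B_0=1$, namely $\{y:L(y)\ge 1\}$, is an upper set $\{y\ge\zeta\}$, which is exactly the form \eqref{det} of a single-threshold detector. The entry point is the observation that, given $B_0=b_0$, the current-slot contribution $Y_T^c\sim\mathrm{Poisson}(Nb_0p_T(t_s))$ is independent of the previous bits and of $Y_C$; plugging \eqref{received signal} into \eqref{STDpe6} then yields the exact identity $p_Y[\cdot\,|\,B_0=1]=\mathrm{Poisson}(Np_T(t_s))\otimes p_Y[\cdot\,|\,B_0=0]$. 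Writing $h:=p_Y[\cdot\,|\,B_0=0]$ and $\mu:=Np_T(t_s)>0$, it therefore suffices to show that $(\mathrm{Poisson}(\mu)\otimes h)[y]/h[y]$ is non-decreasing in $y$.

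This rests on two facts. First, $h$ should be a log-concave PMF. Granting that, the conclusion follows from the variation-diminishing property of convolution: $(\mathrm{Poisson}(\mu)\otimes h)-h=(\mathrm{Poisson}(\mu)-\delta_0)\otimes h$, and the signed sequence $\mathrm{Poisson}(\mu)-\delta_0=(e^{-\mu}-1,\,e^{-\mu}\mu,\,e^{-\mu}\mu^2/2,\dots)$ exhibits exactly one sign change, from negative to positive; convolving it with the one-sided log-concave (hence $\mathrm{PF}_2$) sequence $h$ cannot increase the number of sign changes and preserves the negative sign at $y=0$, so $(\mathrm{Poisson}(\mu)\otimes h)[y]-h[y]$ changes sign at most once, from negative to positive — which is precisely the MLR property. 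Equivalently, and more concretely: log-concavity of $h$ makes $h[y+1]/h[y]$ non-increasing, hence $h[y-j]/h[y]$ non-decreasing in $y$ for each fixed $j\ge 0$, and summing these with the positive Poisson weights (with new non-negative terms appearing as $y$ grows) keeps $(\mathrm{Poisson}(\mu)\otimes h)[y]/h[y]$ non-decreasing.

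The substance of the proof is thus the log-concavity of $h$ in the high-rate regime. For $Y_C$ this is where Subsection \ref{III.D} is used: $Y_C$ is a Poisson RV mixed over the rate $M$ whose density $f_M$ is a Gaussian restricted to $\mathbb{R}_+$, and hence log-concave; a Poisson distribution mixed against a log-concave mixing density is log-concave — because the Poisson kernel $e^{-m}m^k/k!$ is totally positive of order $2$ in $(k,m)$, so the basic composition theorem yields a log-concave marginal (one may alternatively read log-concavity off the closed form \eqref{distinf} via the standard recurrences for $D_\nu(\cdot)$). Conditioned on all past bits, $Y_T^I$ is Poisson, hence log-concave, and convolution preserves log-concavity, so $p_Y[\cdot\,|\,B_0=0,\mathbf{b}_{1:k_T}]$ is log-concave for every $\mathbf{b}_{1:k_T}$. \textbf{The main obstacle} is the last step: $h=p_Y[\cdot\,|\,B_0=0]$ is a \emph{mixture} over the unknown ISI bits of these log-concave PMFs, and mixtures of log-concave PMFs are not log-concave in general. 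The resolution is genuinely high-rate — as $\lambda_e\to\infty$ the CPNS-noise variance $k_2$ grows without bound, so the dispersion of $Y_C$ dominates the bounded spread (of order $Nk_T$) of the signal-plus-ISI Poisson components, and convolution with such a broad log-concave $Y_C$ restores log-concavity of the full conditional PMF; turning this into a clean estimate (or passing to the limit $\lambda_e\to\infty$, or treating the ISI-free case $k_T=0$ where the issue disappears) is the delicate part I expect the proof to dwell on. With $h$ log-concave established, the MLR argument above completes the proof.
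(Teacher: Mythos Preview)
Your plan and the paper's proof coincide at the top level: both reduce the claim to showing that the total additive ``noise'' $h=p_Y[\cdot\,|\,B_0=0]$ (the paper writes it as $p_{Y_{IC}}$) is a log-concave PMF, after which optimality of a single-threshold rule follows. You spell out the MLR/variation-diminishing mechanism explicitly; the paper just cites \cite{Pruc} for ``signal embedded in additive log-concave noise $\Rightarrow$ STD optimal,'' which is exactly your argument packaged as a lemma. Your treatment of $Y_C$ (Poisson mixed over a log-concave Gaussian rate) also matches the paper's.

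The one substantive divergence is precisely the step you flag as the \textbf{main obstacle}: the mixture over the unknown ISI bits $\mathbf b_{1:k_T}$. You propose a genuinely asymptotic fix --- let $k_2\to\infty$ so that the broad log-concave $Y_C$ swamps the bounded spread of the ISI Poisson components and restores log-concavity of the mixture --- and you rightly expect this to be delicate. The paper sidesteps that analysis by a different device: since $Y_T^I+Y_C\mid\mathbf b_{1:k_T}\sim\mathrm{Poisson}(M')$ with $M'\sim\mathcal N\big(k_1+\sum_j Nb_jp_T(jT+t_s),\,k_2\big)$, the conditional Gaussian density $f_{M'}(m\,|\,\mathbf b_{1:k_T})$ is \emph{jointly} log-concave in $(m,\mathbf b_{1:k_T})$ (its mean is affine in the bits, so the exponent is a concave quadratic in $(m,\mathbf b)$). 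Multiplying by the uniform prior on $\mathbf b_{1:k_T}$ and by $e^{-m}m^k/k!$, the paper asserts joint log-concavity in $(k,m,\mathbf b_{1:k_T})$ and then invokes ``marginals of a log-concave joint distribution are log-concave'' \cite{boyd} to integrate/sum out $m$ and $\mathbf b_{1:k_T}$ simultaneously. In this route the high-rate assumption is used once --- to obtain the Gaussian form of $f_M$ --- and is not needed a second time to tame the ISI mixture. Your path would work but costs an extra asymptotic estimate; the paper's path buys that step for free by folding the ISI bits into the joint-concavity argument.
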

\begin{proof}
	The detector is supposed to detect the transmitted bits 1 or 0 which is equivalent to the presence or absence of signal $\mathrm{Poisson}(Np_T(t_s))$ which is embedded in the noise $Y_T^I+Y_C$. $Y_T^I$ is Poisson distributed with mean $\sum_{j=1}^{k_T}NB_jp_T(jT+t_s)$ independent from $Y_C$. For CPNS in high-rate regime, we showed in Subsection \ref{III.D} that $Y_C$, follows a Poisson distribution with rate  $M\sim \mathcal{N}(k_1,k_2)$. Therefore, given the previously transmitted bits, $B_{1:k_T}=b_{1:k_T}$, the additive noise is distributed as $Y_{IC}=Y_T^I+Y_C \sim \mathrm{Poisson}(M')$ where $M'\sim \mathcal{N}(k_1',k_2)$ and $k_1'=k_1+\sum_{j=1}^{k_T}Nb_jp_T(jT+t_s)$.
	Therefore, we have 
	\begin{equation} 
	\begin{aligned}
	{p}_{Y_{IC}}[k]=(\frac{1}{2})^{k_T} \sum_{\textbf{b}_{1:k_T}} \int_{0}^{+\infty} {p}_{Y_{IC}}[k|m,B_{1:k_T}=b_{1:k_T}]f_{M'}(m|B_{1:k_T}=b_{1:k_T})dm,
	\end{aligned}
	\end{equation}
where ${p}_{Y_{IC}}[k|m,B_{1:k_T}=b_{1:k_T}]={p}_{Y_{IC}}[k|m]=\exp(-m)\frac{ m^k}{k!}$ and $f_{M'}(m|B_{1:k_T}=b_{1:k_T})=(2\pi k_2)^{-1/2}\mathrm{exp}\big(-\frac{(m-k_1')^2}{2k_2}\big)$. It is easy to see that ${p}_{Y_{IC}}[k|m]$ is a log-concave distribution in terms of $(k,m)$, $f_{M'}(m|B_{1:k_T}=b_{1:k_T})$ is log-concave distribution in terms of $m$ and $b_{1:k_T}$, and also uniform distribution of $p_{B_{1:k_T}}[b_{1:k_T}]=(\frac{1}{2})^{k_T}$ is log-concave distribution. Since pointwise multiplication of log-concave functions is log-concave \cite{boyd}, joint distribution of $(Y_{IC},M',B_{1:k_T})$, i.e., $(\frac{1}{2})^{k_T} {p}_{Y_{IC}}[k|m,b_{1:k_T}]f_{M'}(m,b_{1:k_T}) $, is log-concave. \color{black} Moreover, the marginal distributions of log-concave joint distribution is a log-concave \cite{boyd}. Therefore, ${p}_{Y_{IC}}[k]$ is a log-concave distribution.
	On the other hand, the optimal ML detector of the presence of signal embedded in the additive log-concave noise is a single-threshold detector \cite{Pruc}.   
\end{proof}

Now, we derive the BER of  the considered  DMC system in the presence of CPNS in high-rate regime. Obviously, given $\textbf{B}_{0:k_T}=\textbf{b}_{0:k_T}$, $Y=Y_T^c+Y_T^I+Y_C$ is a Poisson RV whose mean is $M''=M+\sum_{j=0}^{k_T}Nb_jp_T(jT+t_s)$. Since $M\sim \mathcal{N}(k_1,k_2)$, we have $M''\sim \mathcal{N}(k_1'',k_2)$ in which $k_1''=k_1+\sum_{j=0}^{k_T}Nb_jp_T(jT+t_s)$. Considering \eqref{distinf}, we can write 
\begin{equation}\label{distinf2}  
\begin{aligned} 
{p}_{Y}[k|\textbf{B}_{0:k_T}=\textbf{b}_{0:k_T}]={l}{(k_1'',k_2)} k_2^{(k+1)/2}D_{-k-1}\bigg(\sqrt{k_2}(1-\frac{k_1''}{k_2})\bigg).
\end{aligned}	
\end{equation}
Therefore, given a STD with threshold $\zeta$, error probability is given by
\begin{equation}\label{STDpe2} 
\begin{aligned}
\mathrm{P}_e=(\frac{1}{2})^{k_T+1}\sum_{\textbf{b}_{1:k_T}}\Big(1-\mathcal{F}_Y(\zeta|B_0=0,\textbf{B}_{1:k_T}=\textbf{b}_{1:k_T})+\mathcal{F}_Y(\zeta|B_0=1,\textbf{B}_{1:k_T}=\textbf{b}_{1:k_T})\Big),
\end{aligned}
\end{equation} 
where $\mathcal{F}_Y(\cdot)$ denotes the CDF of RV $Y$ whose pdf is given by \eqref{distinf2}.  To obtain the optimal threshold value, one should minimize BER  in \eqref{STDpe2} in terms of $\zeta$. The following simple lemma concludes that when STD is an optimal ML detector, the corresponding error probability  is quasiconvex. 
Therefore, numerical iterative algorithms such as bisection method can be employed to obtain the optimal threshold value \cite{azmi2}-\cite{boyd}.

 
\color{black}

\begin{lem} \label{theorem2} 
The optimal ML detector in \eqref{ML22} is single-threshold with optimal threshold $\zeta_o$, if and only if the BER in \eqref{STDpe2} is a quasiconvex function of the threshold with global minimum at $\zeta_o$.
\end{lem}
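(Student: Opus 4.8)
The plan is to reduce the claim to an elementary unimodality statement about the first difference of the error probability. First I would absorb the average over the unknown previous bits into a single pair of likelihoods, writing $q_b[y]\overset{\Delta}{=}p_Y[y|B_0=b]$ as in \eqref{STDpe6}; with equiprobable inputs the rule \eqref{ML22} then declares $\hat B_0=1$ at an observation $y$ exactly when $q_1[y]\ge q_0[y]$ (the fixed tie-breaking inside the $\operatorname{argmax}$ is irrelevant to the error probability), and the BER of the threshold detector $\Phi$ of \eqref{det} with threshold $\zeta$ is $\mathrm{P}_e(\zeta)=\tfrac12\sum_{y\ge\zeta}q_0[y]+\tfrac12\sum_{y<\zeta}q_1[y]=\tfrac12+\tfrac12\sum_{y\ge\zeta}\bigl(q_0[y]-q_1[y]\bigr)$. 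Since the observation takes nonnegative integer values, two thresholds in the same unit interval induce the same detector, so it suffices to let $\zeta$ range over $\{0,1,2,\dots\}$, and the one computation I need is
\[
\mathrm{P}_e(\zeta+1)-\mathrm{P}_e(\zeta)=\tfrac12\bigl(q_1[\zeta]-q_0[\zeta]\bigr),
\]
so that $\mathrm{P}_e$ is non-increasing across $\zeta$ iff $q_1[\zeta]\le q_0[\zeta]$, i.e.\ iff the ML rule decides $0$ at $y=\zeta$, and non-decreasing iff it decides $1$.

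With this identity in hand both directions are bookkeeping. For the ``only if'' direction: if the ML detector equals $\Phi$ with threshold $\zeta_o$, then it decides $0$ on $\{0,\dots,\zeta_o-1\}$ and $1$ on $\{\zeta_o,\zeta_o+1,\dots\}$, so $\mathrm{P}_e$ is non-increasing on $\{0,\dots,\zeta_o\}$ and non-decreasing on $\{\zeta_o,\zeta_o+1,\dots\}$; such a sequence is unimodal, hence quasiconvex (every sublevel set is an interval), with its global minimum at the turning point $\zeta_o$ — which is exactly the stated property of \eqref{STDpe2}. For the converse: if the BER \eqref{STDpe2} is quasiconvex in $\zeta$ with global minimum at the threshold $\zeta_o$, then quasiconvexity together with a genuine minimizer forces $\mathrm{P}_e$ to be non-increasing on $\{0,\dots,\zeta_o\}$ and non-decreasing on $\{\zeta_o,\zeta_o+1,\dots\}$; reading the first-difference identity backwards gives $q_1[\zeta]\le q_0[\zeta]$ for all $\zeta<\zeta_o$ and $q_1[\zeta]\ge q_0[\zeta]$ for all $\zeta\ge\zeta_o$, so the ML rule \eqref{ML22} declares $\hat B_0=0$ precisely for $y<\zeta_o$ and $\hat B_0=1$ precisely for $y\ge\zeta_o$ (resolving any boundary ties to the ``$1$'' side, as $\Phi$ does), i.e.\ it is the single-threshold detector with threshold $\zeta_o$.

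I do not expect a genuine obstacle, but two points need a careful line. The first is the tie set $\{\zeta:q_0[\zeta]=q_1[\zeta]\}$: on it the ML decision is free and $\mathrm{P}_e$ is locally flat, so one must observe that ties can always be assigned to agree with $\Phi$ without changing $\mathrm{P}_e$, and that if $\mathrm{P}_e$ has a flat bottom one simply takes $\zeta_o$ to be, say, the smallest global minimizer. The second is that the hypothesis pins the minimum at an honest finite threshold $\zeta_o$: a monotone $\mathrm{P}_e$ is also quasiconvex, so without that requirement the converse could collapse to an ``always decide $1$'' (or ``always decide $0$'') rule; since the lemma states the minimum is attained at a threshold $\zeta_o$, this degenerate case is excluded and nothing extra is needed. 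Finally, the argument uses no feature of the high-rate distribution \eqref{distinf2} beyond the fact that \eqref{STDpe2} is its threshold-detector BER, so the same reasoning characterizes STD-optimality for the general Poisson-mixture noise of \eqref{received signal} as well.
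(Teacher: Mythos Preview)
Your proof is correct and follows essentially the same approach as the paper's Appendix~C: both directions hinge on the identity linking the change of $\mathrm{P}_e$ in $\zeta$ to $\tfrac12\bigl(p_Y[\zeta|B_0=1]-p_Y[\zeta|B_0=0]\bigr)$, from which the equivalence between unimodality of $\mathrm{P}_e$ and the ML decision regions being a single half-line is immediate. Your use of the discrete first difference (rather than the paper's derivative notation) is in fact the more appropriate formulation for the integer-valued observation, and your explicit handling of ties and of the degenerate monotone case are refinements the paper glosses over, but the underlying argument is the same.
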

\begin{proof} The proof is provided in Appendix C.
\end{proof}


 \newtheorem{corollary}{Corollary} 
\begin{corollary}
Obviously, when the BER is not a quasiconvex function of $\zeta$, it has multiple local minimum and maximum points at $\zeta_1,\zeta_2,\cdots,\zeta_m$ characterizing optimal MTD.
\end{corollary}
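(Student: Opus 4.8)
The plan is to treat the corollary as the logical complement of Lemma \ref{theorem2} and then make explicit the correspondence between the extrema of the BER and the decision boundaries of the ML rule. First I would invoke the contrapositive of Lemma \ref{theorem2}: since $P_e$ in \eqref{STDpe2} is assumed not to be quasiconvex, the optimal ML detector \eqref{ML22} cannot be a single-threshold detector, hence it is a genuine MTD using $m\ge 2$ thresholds. It then remains to show that the thresholds of this MTD are precisely the local minima and maxima $\zeta_1,\zeta_2,\ldots,\zeta_m$ of $P_e$.

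The second step records why non-quasiconvexity forces several alternating extrema. A function of a single variable is quasiconvex exactly when it is unimodal, i.e., non-increasing up to a single global minimizer and non-decreasing afterwards, so it has no interior local maximum. Hence if $P_e$ is not quasiconvex there must exist $\zeta_a<\zeta_b<\zeta_c$ with $P_e(\zeta_b)>\max\{P_e(\zeta_a),P_e(\zeta_c)\}$, producing at least one interior local maximum flanked by local minima; iterating this observation yields an alternating sequence of local minima and maxima $\zeta_1<\zeta_2<\cdots<\zeta_m$. Since the conditional CDFs in \eqref{P_e1}--\eqref{P_e2} are monotone between consecutive integer crossings, the continuous interpolation and the integer-lattice sequence share the same extremal locations.

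The third and key step identifies these extrema with the ML decision boundaries. Because $Y$ is integer-valued and the detector \eqref{det} compares $y$ with $\zeta$, I would work with the forward difference of \eqref{STDpe2}: raising the threshold from $\zeta$ to $\zeta+1$ flips only the decision made at the single value $y=\zeta$ (from $\hat{B}_0=1$ to $\hat{B}_0=0$), so after the average over $\textbf{b}_{1:k_T}$ collapses into the marginal likelihoods \eqref{STDpe6} one obtains the exact identity
\begin{equation}
P_e(\zeta+1)-P_e(\zeta)=\tfrac{1}{2}\big(p_Y[\zeta|B_0=1]-p_Y[\zeta|B_0=0]\big).
\end{equation}
Every sign change of the right-hand side is a crossing $p_Y[\zeta|B_0=1]=p_Y[\zeta|B_0=0]$, i.e., a point where the likelihood ratio in \eqref{ML22} passes through unity and the ML decision flips; a local minimum of $P_e$ is a sign change $-\to+$ (decision switches $0\to1$) and a local maximum is $+\to-$ (decision switches $1\to0$). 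The minima and maxima therefore alternate and together form exactly the threshold set $\{\zeta_1,\ldots,\zeta_m\}$ partitioning $\mathbb{R}_{+}$ into the regions of the optimal MTD, which declares $\hat{B}_0=1$ precisely on the intervals where $p_Y[\cdot|B_0=1]>p_Y[\cdot|B_0=0]$. The main obstacle is this third step: establishing rigorously that the extrema of $P_e$ coincide with the likelihood-ratio crossings and that the minimum/maximum alternation matches the $0\to1$/$1\to0$ transitions, while verifying that the average over the previous bits collapses cleanly into the marginal likelihoods of \eqref{ML22}; a secondary subtlety is ruling out tangential (non-transversal) crossings so that each extremum yields a genuine threshold.
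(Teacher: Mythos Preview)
The paper does not actually prove this corollary: it is stated with the word ``Obviously'' and left without a separate proof environment or appendix. So your proposal is not competing against any argument in the paper; it is supplying one where the authors offered none.

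Your argument is correct and in fact tracks the same mechanism the paper exploits in Appendix~C for Lemma~\ref{theorem2}: there the authors compute $\partial P_e/\partial\zeta=\tfrac12\big(p_Y[\zeta|B_0=1]-p_Y[\zeta|B_0=0]\big)$ and read off monotonicity intervals from the sign of the likelihood difference. Your Step~3 is the discrete analogue (forward difference instead of derivative), which is arguably cleaner since $Y$ is integer-valued, and your Steps~1--2 make explicit the contrapositive and the unimodality characterization of quasiconvexity that the paper leaves implicit. The only places you flag as delicate---collapsing the average over $\textbf{b}_{1:k_T}$ into the marginal likelihoods, and handling non-transversal crossings---are genuine technical caveats, but the paper simply ignores them, so you are being more careful than the source, not less.
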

Optimality of STD in special case of large values of $\lambda_e$ is borrowed from the log-concavity of normal distribution of $M$ that results the log-concavity of $Y_C$. But, in the general case of CPNS, our results indicate that distributions of $M$ and then $Y_C\sim \mathrm{Poisson}(M)$ may not be log-concave and may even be multimodal in some conditions. Thereby, the distributions of received signals given $b_0=0,$ and 1 in \eqref{STDpe6} may have multiple intersection points and correspondingly leading to optimality of a MTD (and not a STD). 
Fig. \ref{h3} (Left) depicts the logarithm of distribution of noise $Y_C$ obtained based on simulation, where $N=2 \times 10^5 $, $T=0.2 $, $d_T=10 \; \mu m$, $d_C=5.5 \; \mu m$, $\lambda_a=5 \times 10^5$, $\lambda_e=15$, $r_R=2.2 \; \mu m$. It is obvious that it is not a concave curve and then   the logarithm of distribution of noise is not a log-concave distribution. Correspondingly, Fig. \ref{h3} (Right) shows the distributions of the number of received molecules given the transmission of bits 1 and 0 in the current time slot, i.e., $p_Y[k|B_0=1]$ and $p_Y[k|B_0=0]$ given by \eqref{STDpe6}, respectively. 
It is observed that these two distributions are bimodal and have 3 intersection points and then  the  optimal ML detector has 4 decision making regions which results in a MTD. 
\begin{remark}
	Based on our vast numerical and simulation results, this  phenomenon  rarely occurs for considered MC system in the presence of CPNS and is only theoretically of interest. Note that even for such rare scenarios, the difference between BER of the suboptimal STD and the  optimal MTD would be negligible, as it is deduced from Fig. \ref{h3}.
\end{remark}

 	\definecolor{alizarin}{rgb}{0.82, 0.1, 0.26}
\section{Numerical and Simulation Results}\label{section5}

In this section, we evaluate the performance of the point-to-point DMC system in the presence of a CPNS employing a simple OOK modulation.  We have employed the PBS introduced in \cite{PBS} for analysis. To perform the PBS, the time is divided into small time steps $\Delta t$ s. 
The molecule locations are known and the molecules move independently in the 3-
dimensional space in the PBS. In each dimension, the displacement of a molecule in
$\Delta t$ s is modeled as Gaussian distribution with zero mean and variance $2D\Delta t$.  The DMC system parameters adopted for the analytical and simulation are given in Table \ref{table12}.

Fig. \ref{fig:comp} shows the BER of the DMC system in the presence of a CPNS as a function of  the time interval $\tilde{T}$ used for the rare-event analysis  obtained based on (i) the rare-event exact analysis given in \eqref{conv}, (ii) the rare-event approximate analysis given in \eqref{PM noise} and (iii) PBS.  The distance between the CPNS and the receiver and the corresponding channel memory are  $d_C=8 \;\mu m$ and   $k_C=10$, respectively. The event amplitude of the CPNS, which we refer to as the CPNS amplitude, is set to $\lambda_a=10^5$. The BER curves are plotted for two different CPNS rates,  $\lambda_e=2$ \text{and}\;$10$. We observe that both approximation and exact analytical results approach the PBS results for sufficiently short time intervals $\tilde{T}$ for which rare-event property holds ($\lambda_e \tilde{T}\ll 1$). \color{black} 
 Also, it is observed that the rare-event analysis deviates from the corresponding PBS result for higher $\tilde{T}$ values, since  the condition ${\lambda \tilde{T}} \ll 1$ is not well satisfied leading to the rare event property of Poisson distribution does not hold.




Fig. \ref{f2} depicts the BERs of the DMC system in the presence of a CPNS obtained from the rare-event approximate analysis and the PBS. As observed, the PBS results confirm the proposed analysis. \color{black}
Also, this figure compares the BERs of the DMC system for a CPNS (the rare-event approximate analysis) and a homogeneous Poisson noise \cite{Arjmandi2013}. For a higher accuracy of rare-event analysis for the CPNS, we use  very small value of $\tilde{T}=0.002$.  
To have a fair comparison, the average mean of the homogeneous Poisson noise received in the current time slot, denoted by $\lambda_0$, is set equal to the average number of molecules received from the CPNS in the current time slot, i.e., 
\begin{equation}
\begin{aligned}
\hat{\lambda}_0=\sum_{n=0}^{\infty} \int_{{{\theta}_{1:n}}}\lambda_a \sum_{i=1}^n p_C(k_CT-\theta_i)f_{\theta}(\theta_{1:n})p_{N_e}[n]d\theta_{1:n}.
\end{aligned}
\end{equation}

The BER is depicted versus the threshold value $\zeta$ for  $\lambda_e= 0.5$ and $\lambda_e=5$. We  use  $d_C=8 \; \mu m$.
Fig. \ref{f2} reveals that assuming a homogeneous Poisson noise at the receiver leads to an overly optimistic performance prediction when the noise source is actually a CPNS. Therefore, homogeneous Poisson noise models are not capable of modeling CPNSs. Furthermore, a CPNS with a lower rate ($\lambda_e=0.5$), results in a higher performance than a CPNS with a higher rate ($\lambda_e=5$), as expected. Also, it is observed that the simulation results confirm the provided analysis.

Fig. \ref{h1}  compares the distribution of $M=\lambda_a \sum_{i=1}^{N_e}p_C(k_CT-{\Theta}_i)$ obtained based on the simulation with normal distribution approximation $\mathcal{N}(k_1,k_2)$ given in \eqref{gauus}, for different values of $\lambda_e$. It is observed that the distribution of $M$ approaches the normal distribution for high values of $\lambda_e$ ($\lambda_e>100$), confirming our analysis provided in Section \ref{III.D}.   
Correspondingly, Fig. \ref{h2} depicts the BER of the DMC system in the presence of a CPNS versus $\lambda_e$ obtained based on Monte Carlo simulation\footnote{ A Monte-Carlo simulation has been employed,   since  applying  the PBS takes very long time for large values of $\lambda_e$.} and  analysis given in \eqref{STDpe2} which employs normal approximation of $\mathcal{N}(k_1,k_2)$, for different values of $T=0.02, 0.1$. It is observed that the simulation results coincide our analysis for high values of $\lambda_e$ ($\lambda_e>100$).  
 As Fig. 6 confirms, Gaussian approximation is not accurate for smaller $\lambda_e$ values  which leads to the increasing gap between the BERs obtained from the high-rate analysis and simulation results.  Moreover, we observe that BER increases by increasing $\lambda_e$.

%

Fig. \ref{f4} shows the BER of the DMC system in the presence of a CPNS in low and moderate rate regime ($\lambda_e=0.5, 5, 10$) versus threshold $\zeta$, for various parameters including $\lambda_a$, $N$, and $d_C$. The rare-event approximation  analysis was used. The distance between the transmitter and the receiver is fixed to $d_T=4 \; \mu m$. For all five considered scenarios, the BER is a quasiconvex function of $\zeta$, and considering Lemma 1, the optimal detector is a simple STD.

Fig. \ref{f8} depicts the BER of the DMC system versus CPNS amplitude rate ($\lambda_a$) for different values of $d_C=6, 12, 25, 50,  100\; \mu m$. As expected, by increasing the distance of the CPNS from the receiver, $d_C$, the BER decreases which results in a better performance. For small values of $\lambda_a$ such as $\lambda_a \leq 10^3$, the performance is approximately the same for all distances as the impact of the CPNS  becomes negligible for extremely low $\lambda_a$s. In these cases, the randomness of the diffusion channel between the transmitter and receiver is the dominant effect on  the performance. On the other hand, when we have higher $\lambda_a$ values or smaller $d_C$s, sensitivity of BER to both values of $\lambda_a$ and $d_C$ increases.

\section{Conclusion}\label{section6}
In this paper, impact of the presence of biological external noise source for DMC system was investigated. The release of molecules by a biological noise source was particularly modeled as a CPP, inspired by the release processes of some biological entities. \color{black} A point-to-point DMC link in the presence of a CPNS was considered. The distribution of the number of molecules received from CPNS was analyzed based on the rare event distribution and shown to be a Poisson mixture distribution.   
Assuming a simple on-off keying modulation, symbol-by-symbol ML detector was formulated and BER was analyzed in closed-form expressions. For special case of CPNS in high-rate regime, the noise received from the CPNS is approximated by a Poisson process whose rate is normally  distributed. It was proved that the optimal ML detector is a simple STD, for CPNS in high rate regime.
However, our results revealed that STD may not be the optimal ML detector, in the general case of CPNS. Moreover, based on our results, the  presented model and our analysis for the DMC system performance in the presence of a CPNS is necessary and simply adopting conventional homogeneous Poisson noise model may lead to overly optimistic performance predictions. This new type of noise source introduces diffusion channel different from conventional MC channels that should be investigated from various perspectives.
 Analyzing the capacity of this new introduced diffusion channel in the presence of a CPNS, the extension of single CPNS source to the multiple CPNS sources, employing more complicated modulation schemes, and considering more realistic assumptions for the biological environment are left for future works. Moreover, analyzing the impact of biological noise sources with more complicated release processes on the DMC system performance remains open.


\begin{table}
	\footnotesize
	 \caption{DMC system parameters used in simulations} 
	\begin{center}
		\begin{tabular}{|c|c|c|}
			\hline
			\bfseries{Parameter} & \bfseries{Variable} & \bfseries{Value}   \\
			\hline \hline
			Diffusion coefficient & $D$ & $1.14 \times 10^{-9}\;m^2/sec$\\
			\hline
			Time-slot duration & $T$ & $0.5, 0.2, 0.1,$ and $0.02$ \; $sec$ \\
			\hline
			Number of transmitted molecules for bit '1'& $N$ & $0.5 \times 10^5$\\
			\hline
			Distance between transmitter and receiver & $d_T$ & $4 \; \mu m$\\
			\hline
			Distance between CPNS and receiver &  $d_C$  &$6, 8, 12, 25,50$, and $ 100 \; \mu m$ \\
			\hline    
			memory of transmitter-to-receiver channel for $\rho=0.95$   & $k_T$ & 10 \\
			\hline
		    memory of CPNS-to-receiver channel for $\rho=0.95$ & $k_C$ & 15 \\
			\hline
			CPNS release time rate & $\lambda_e \tilde{T}$ & 0.001, 0.01, 0.1, 0.25 \text{and}\; 0.5 \\
			\hline
			CPNS release amplitude rate & $\lambda_a$ & $8\times 10^6, 10^5$, and $2\times 10^4$ \\
			\hline
			Receiver radius & $r_R$ & $0.5\; \mu m$ \\
			\hline
			Number of transmitted bits for PBS & $-$ & $10^8 \;$ bits \color{black}\\
			\hline
			Time step for PBS & $\Delta t $ & $10^{-3} \;$ $sec$ \\
			\hline
			
		\end{tabular}
	\end{center}
	\label{table12} 
\end{table}


\begin{figure}
	\centering
	\includegraphics[width=16 cm,height=5.5 cm]{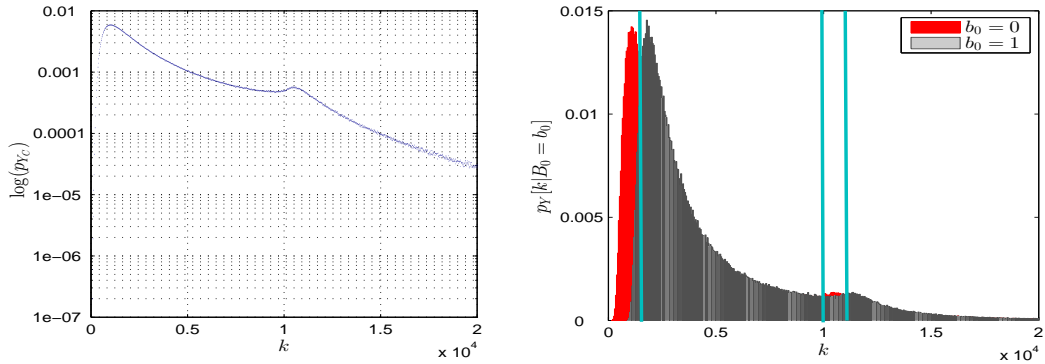}
	\setlength{\abovecaptionskip}{-0.6 cm}
	\caption{(Left) The Logarithm of CPNS noise distribution. (Right) The received signal distribution given transmission of bits 0 and 1. }
	\label{h3}
\end{figure}

\begin{figure}
	\centering
			\begin{minipage}{.5\linewidth}
	\includegraphics[width=8.5cm,height=6cm]{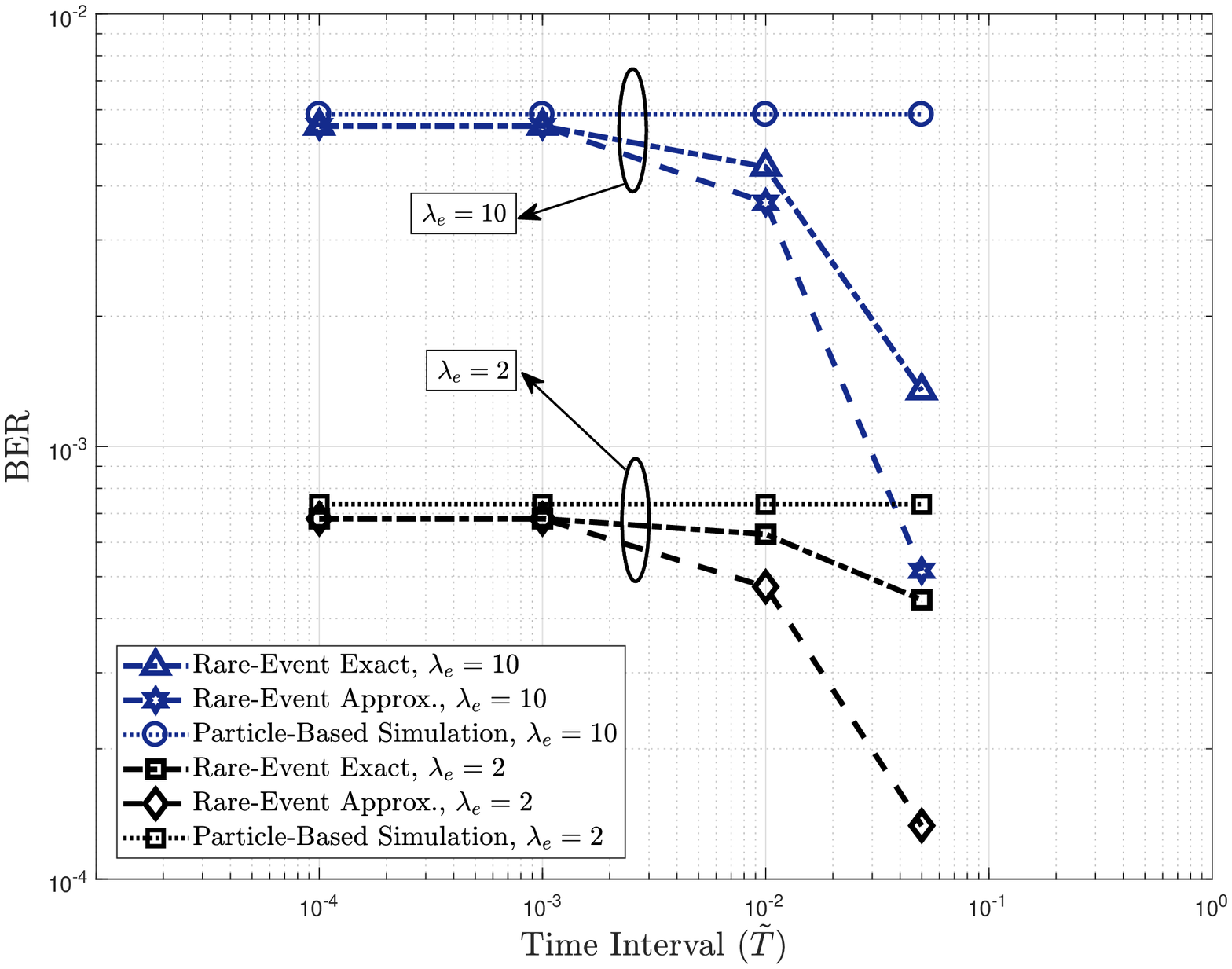}
		\setlength{\abovecaptionskip}{0 cm}
		\caption{Comparison of BERs (exact, approximation, and PBS) versus time interval $\tilde{T}$ for different values of  CPNS rate ($\lambda_e$).}
		\label{fig:comp}
	\end{minipage}\hspace{.25cm}
			\begin{minipage}{.45\linewidth}
		\centering
		\includegraphics[width=8.5cm,height=6cm]{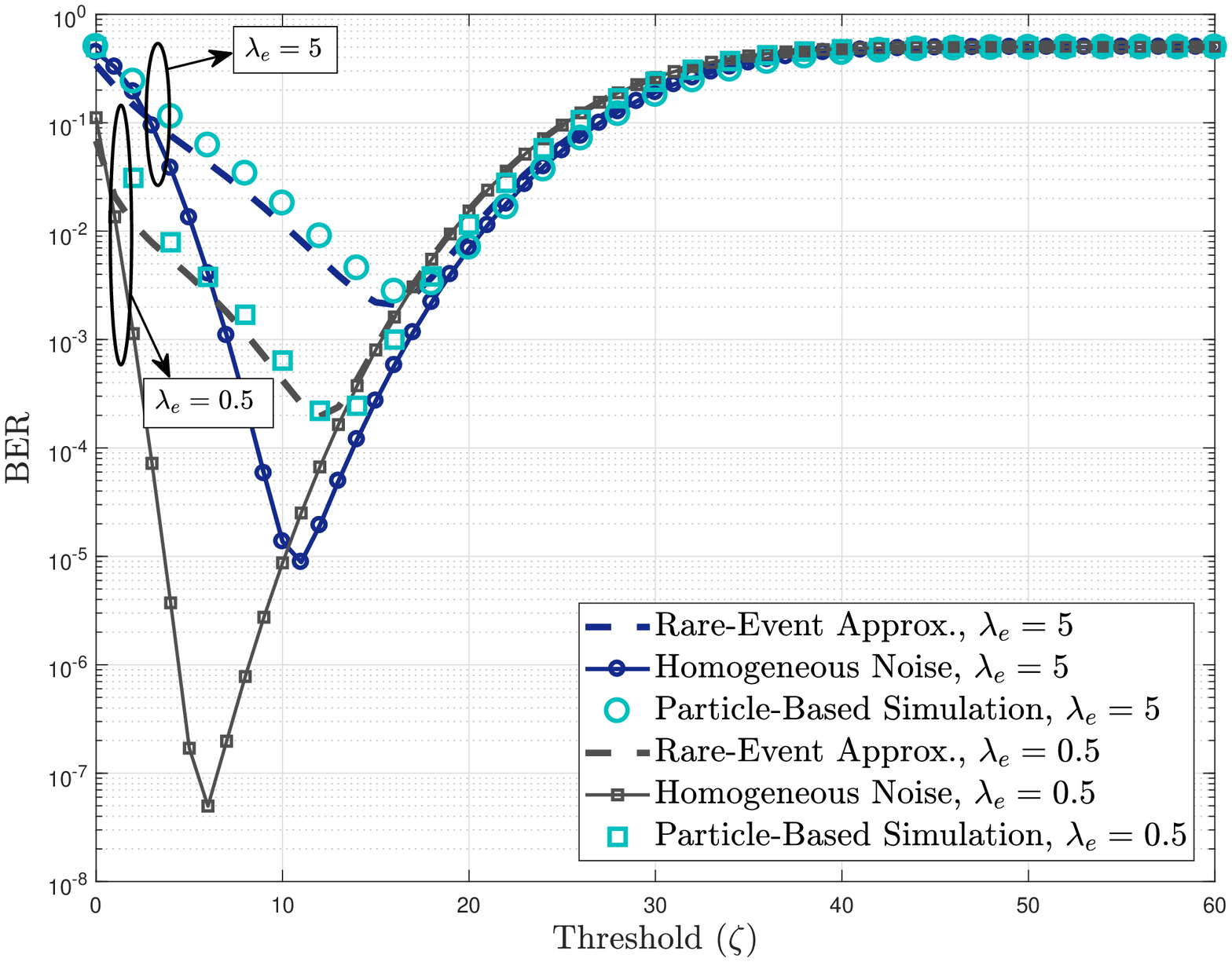}
		\setlength{\abovecaptionskip}{-0.5 cm}
		\caption{BER of the DMC system in the presence of a CPNS  obtained based on the rare-event approximation (confirmed by PBS) compared to the homogeneous Poisson noise assumption.}
		\label{f2}
	\end{minipage}
\end{figure}


\begin{figure}
	\centering
	\begin{minipage}{.5\linewidth}
	\includegraphics[width=8.5cm,height=6cm]{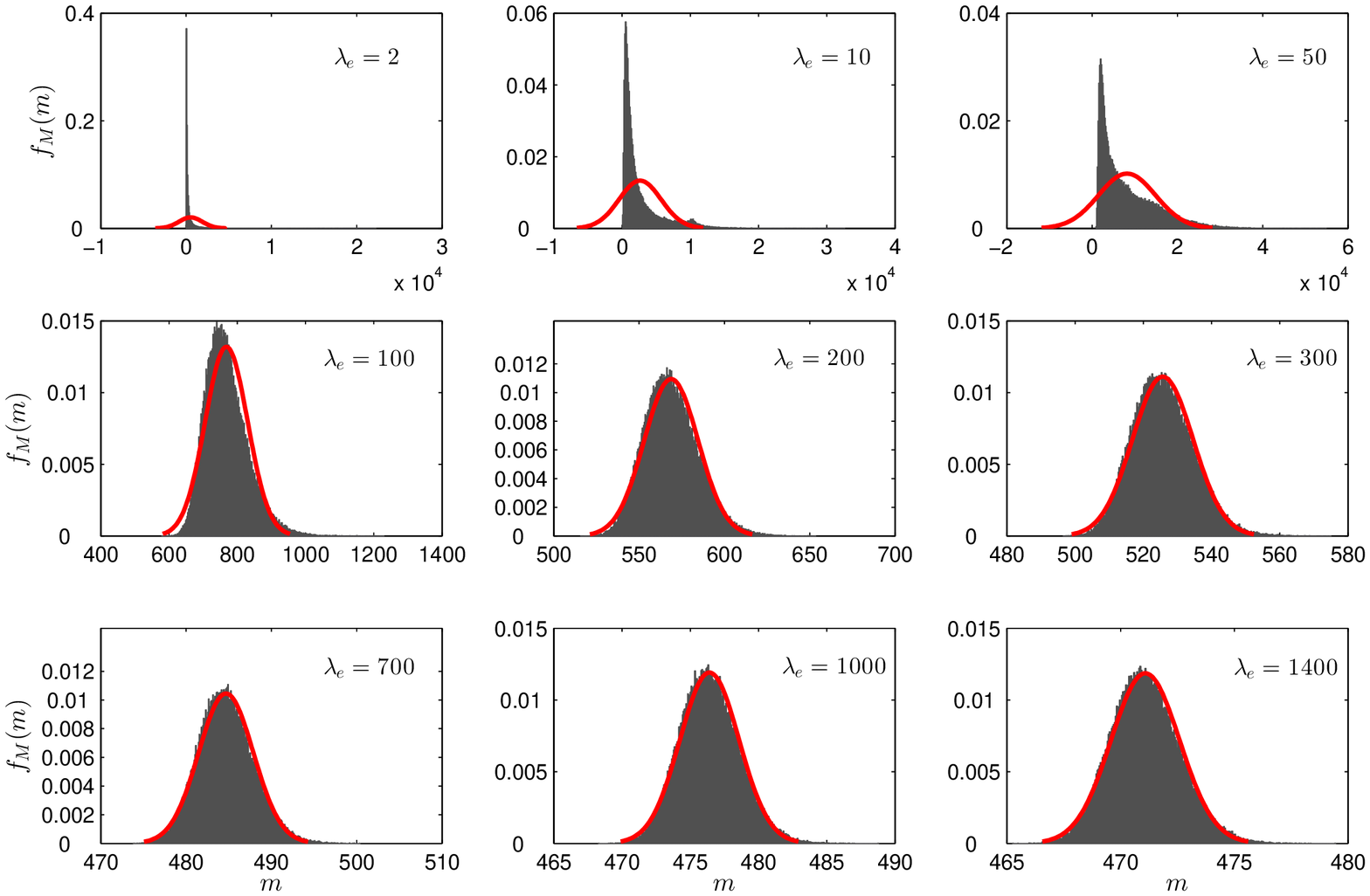}
		\setlength{\abovecaptionskip}{-.85 cm}
		\caption{Distribution of $M(t)$ in \eqref{fpg} obtained from simulation and normal approximation for various CPNS rate ($\lambda_e$).} 
		\label{h1}
	\end{minipage}\hspace{.25cm}
		\begin{minipage}{.45\linewidth}
		\centering
		\includegraphics[width=8.5cm,height=6cm]{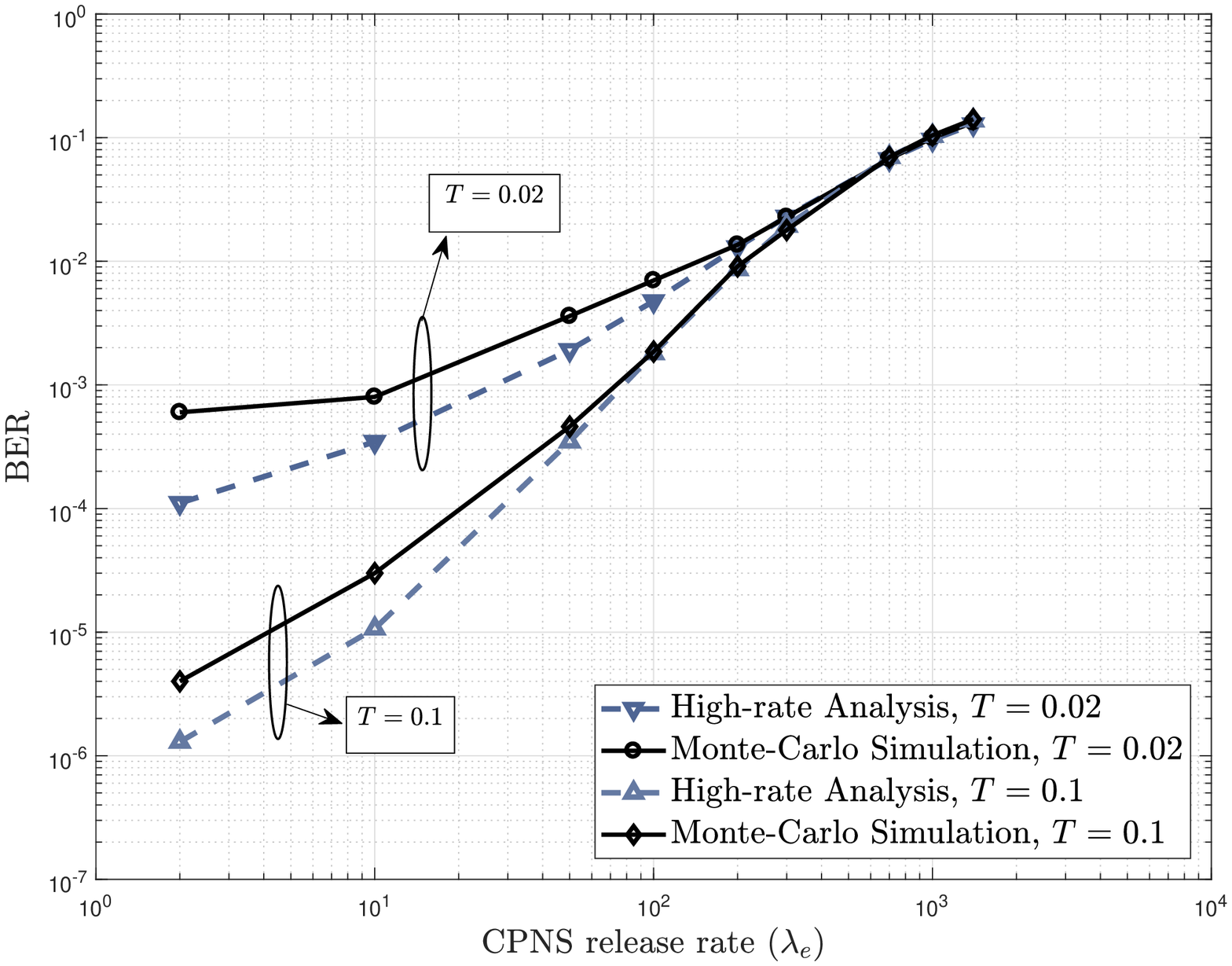}
		\setlength{\abovecaptionskip}{-0.5 cm}
		\caption{BER of the DMC system versus CPNS rate ($\lambda_e$) obtained based on simulation and high-rate analysis.}
		\label{h2}
	\end{minipage}
\end{figure}


\begin{figure}
	\centering
	\begin{minipage}{.5\linewidth}
	\includegraphics[width=8.5cm,height=5.5cm]{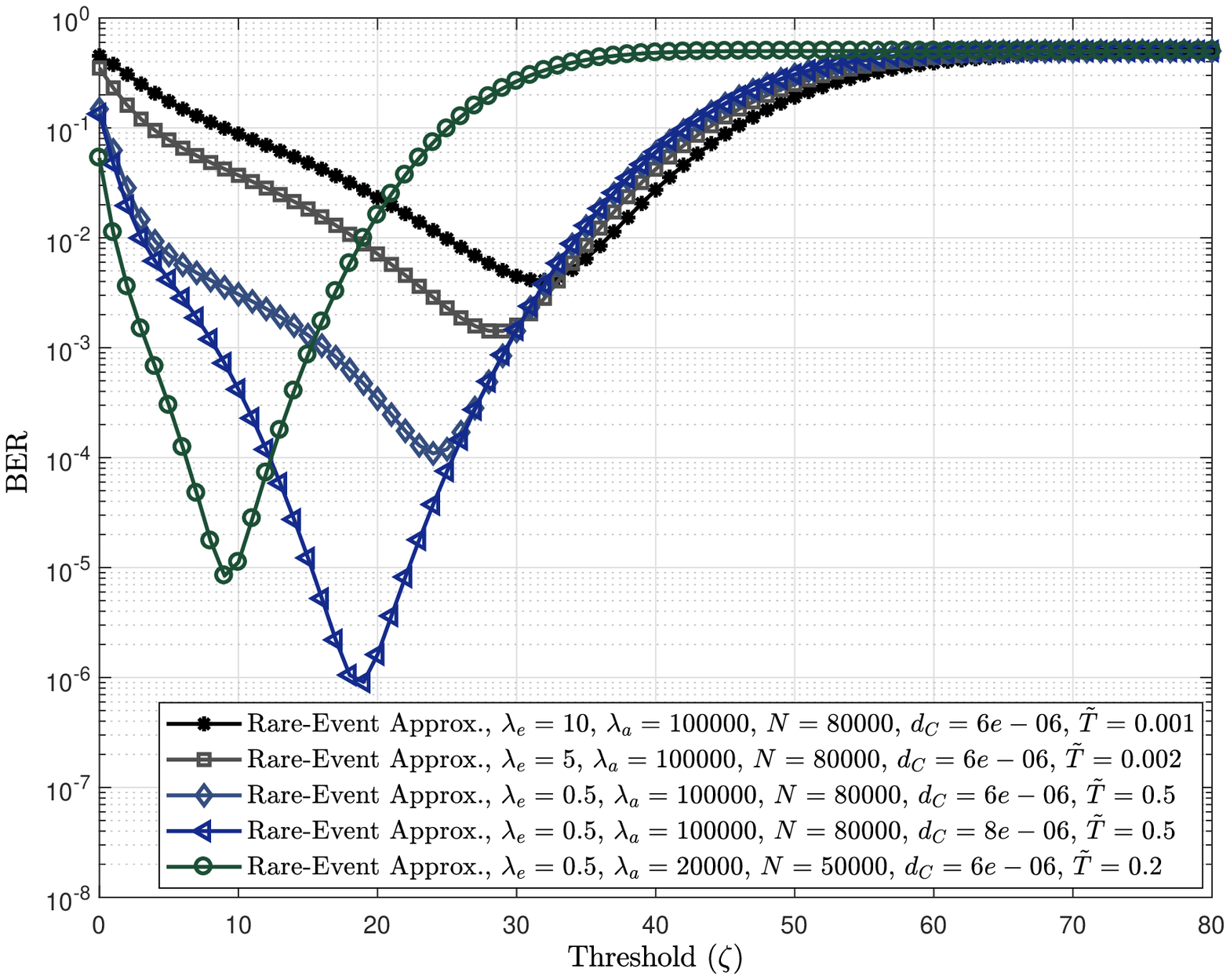}
		\setlength{\abovecaptionskip}{-0.5 cm}
		\caption{BER of the DMC system in the presence of a CPNS employing the rare-event approximation analysis.}
		\label{f4}
	\end{minipage}\hspace{.25cm}
				\begin{minipage}{.45\linewidth}
		\centering
		\includegraphics[width=8.5cm,height=5.5cm]{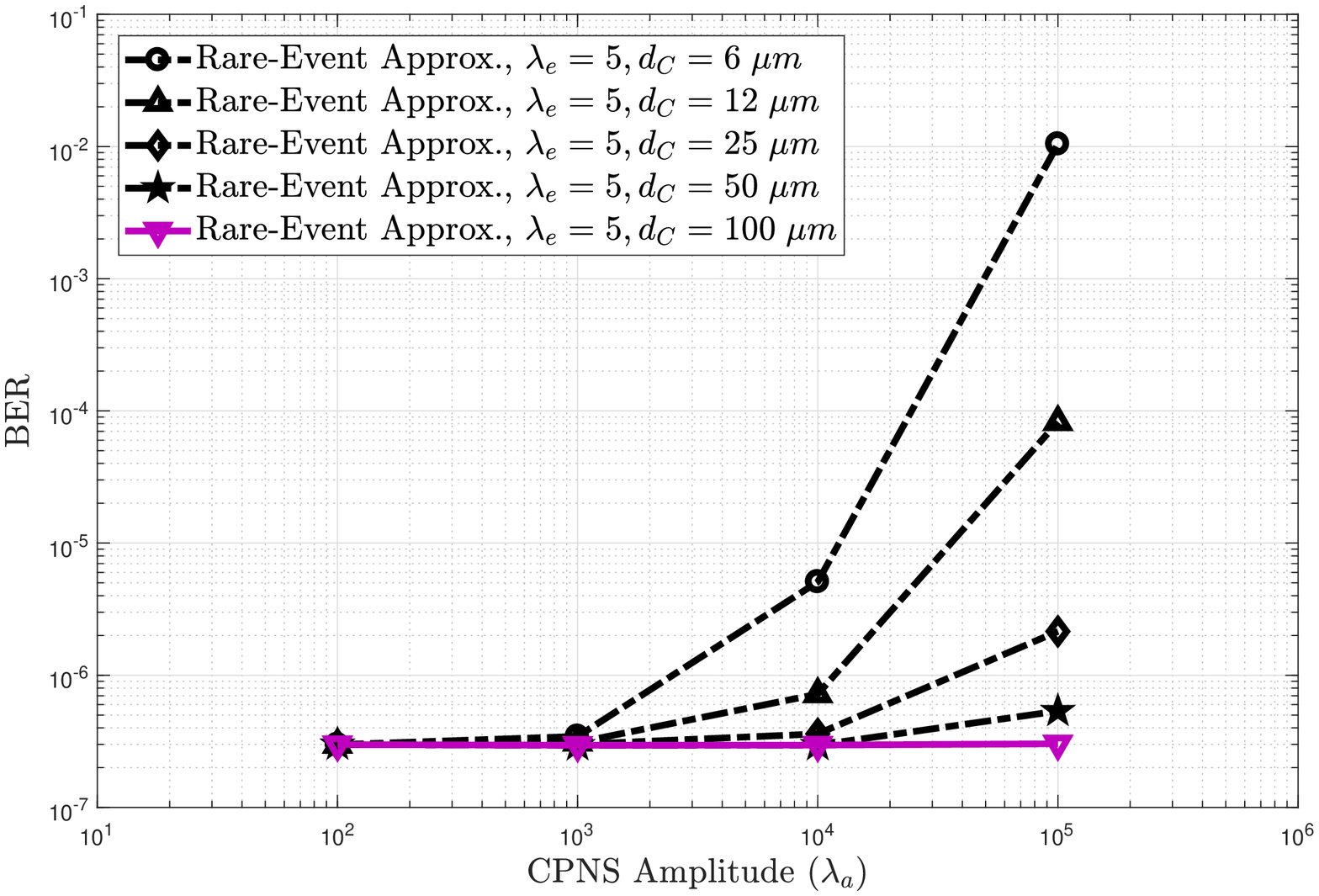}
       		\setlength{\abovecaptionskip}{-.5 cm}
		\caption{
BER of the DMC system in the presence of a CPNS versus CPNS ampitude ($\lambda_a$) for various distances.}	 
		\label{f8}
	\end{minipage}
\end{figure}



%

\color{black}

\appendices
\section{Equivalence of \eqref{conv1} and \eqref{conv}} \label{App-A}

It is straightforward to see that \eqref{conv1} and \eqref{conv} are equivalent. For simplicity of presentation, we show this equivalence for the special case of $\tilde{k}_C=3$. Starting from $\eqref{conv1}$, for $\tilde{k}_C=3$ we have:
\begin{equation}
\begin{aligned}
p_{Y_C}[k]=p_{\tilde{Y}_C^1}[k]\otimes p_{\tilde{Y}_C^2}[k] \otimes  p_{\tilde{Y}_C^{3}}[k],
\end{aligned}
\end{equation}
\normalsize
Substituting $p_{\tilde{Y}_C^1}[k]$, $p_{\tilde{Y}_C^2}[k]$, and $p_{\tilde{Y}_C^3}[k]$  with  \eqref{16} and employing $f_i[k]=p_{\tilde{Y}_C^i}[k|\mathcal{F}_1]$, we obtain: 
\small
\begin{equation}
\begin{aligned}
p_{Y_C}[k]&=((1-\lambda_e \tilde{T})\delta[k]+\lambda_e \tilde{T} {f}_1[k]) \otimes ((1-\lambda_e \tilde{T})\delta[k]+\lambda_e \tilde{T} {f}_2[k]) \otimes  ((1-\lambda_e \tilde{T})\delta[k]+\lambda_e \tilde{T} {f}_3[k]) \\
&=(1-\lambda_e \tilde{T})^3\delta[k]+(1-\lambda_e \tilde{T})^2(\lambda_e\tilde{T})({f}_1[k]+{f}_2[k]+{f}_3[k])+(1-\lambda_e \tilde{T})(\lambda_e \tilde{T})^2({f}_1[k] \otimes {f}_2[k]\\
&+{f}_1[k] \otimes {f}_3[k] + {f}_2[k] \otimes {f}_3[k])+(\lambda_e \tilde{T})^3({f}_1[k] \otimes {f}_2[k] \otimes {f}_3[k])\\
\end{aligned}
\end{equation}
\normalsize
which can be rewritten as follows:
\small
\begin{equation}
\begin{aligned}
p_{Y_C}[k]&=(1-\lambda_e \tilde{T})^3\delta[k]+(1-\lambda_e \tilde{T})^2(\lambda_e \tilde{T})\sum_{h=1}^{3}\delta[k] \otimes f_{\alpha_1^h}[k]\\
&+(1-\lambda_e \tilde{T})(\lambda_e \tilde{T})^2 \sum_{h=1}^{3}\delta[k] \otimes f_{\alpha_1^h}[k] \otimes f_{\alpha_2^h}[k]+(\lambda_e \tilde{T})^3 (\delta[k] \otimes f_{\alpha_1^1}[k] \otimes f_{\alpha_2^1}[k]\otimes f_{\alpha_3^1}[k])\\
&=\sum_{i=0}^{3}(1-\lambda_e \tilde{T})^{3-i}(\lambda_e \tilde{T})^i \sum_{h=1}^{\mathfrak{K}_i}  \delta[k] \otimes {f}_{\alpha_1^h}[k] \otimes \cdots \otimes {f}_{\alpha_i^h}[k].
\end{aligned}
\end{equation}
\normalsize
where $\mathfrak{K}_i  \overset{\Delta}{=}\left(\begin{array}{c}3\\ i\end{array}\right)$. This completes the proof.

\section{Deriving closed-form expression for $p_{Y_C}[k]$ in the high-rate regime} \label{dist HR2}
From \eqref{abc}, we have:
\begin{equation}  
\begin{aligned}
{p}_{Y_C}[k]&=\int_{0}^{+\infty} 
e^{-m}\frac{m^k}{k!}(2\pi k_2)^{-1/2}\mathrm{exp}\big(-\frac{(m-k_1)^2}{2k_2}\big)dm\\
&=\frac{(2\pi k_2)^{-1/2}}{k!}\int_{0}^{+\infty}m^k\mathrm{exp}\big(-m-\frac{(m-k_1)^2}{2k_2}\big)dm\\
&=\frac{(2\pi k_2)^{-1/2}}{k!}e^{-k_1^2/k_2}\int_{0}^{+\infty}m^k\mathrm{exp}\bigg(-\frac{m^2}{2k_2}-(1-\frac{k_1}{k_2})m\bigg)dm.
\end{aligned}	
\end{equation}
\normalsize
By changing variable $m=(2k_2)^{1/2}x$, we obtain
\small
\begin{equation}  
\begin{aligned} \label{bg}
{p}_{Y_C}[k]&=\frac{(2\pi k_2)^{-1/2}}{k!}e^{-k_1^2/k_2}({2k_2})^{(k+1)/2}\int_{0}^{+\infty}x^k\mathrm{exp}\bigg(-x^2-\sqrt{2k_2}(1-\frac{k_1}{k_2})x\bigg)dx.
\end{aligned}	
\end{equation}
\normalsize
On the other hand we have the following integral  \cite{NEW5}, \cite{NEW6}:
\small
\begin{equation}  
\begin{aligned} \label{aaa}
\int_{0}^{+\infty}x^{\nu}e^{-x^2-\gamma x}dx=2^{-(\nu+1)/2}\Gamma(\nu+1)e^{\gamma^2/8}D_{-\nu-1}(\gamma/\sqrt{2}).
\end{aligned}	
\end{equation}
\normalsize
Substituting $\gamma=\sqrt{2k_2}(1-\frac{k_1}{k_2})$ and $\nu=k$ into \eqref{aaa} and applying  \eqref{bg}, \eqref{distinf} is resulted.

\section{The proof for Lemma \ref{theorem2}} \label{lemma1}
Given a STD with the threshold value of $\zeta$, the BER of the system is given by \eqref{STDpe1}.
Therefore, we have
\small
\begin{equation} \label{derpe}
\begin{aligned}
\frac{\partial \mathrm{P}_e}{\partial \zeta}=\frac{1}{2}\Big(p_Y[\zeta|B_0=1]-p_Y[\zeta|B_0=0]\Big).
\end{aligned}
\end{equation} 
\normalsize

First we provide the direct proof, i.e., if  the  optimal ML detector in \eqref{ML22} is STD with optimal threshold $\zeta_o$, $\mathrm{P}_e$ in \eqref{STDpe1} is necessarily quasiconvex function of $\zeta$ with global minimum at $\zeta_o$: Assume that the optimal ML detector is a STD, where the optimal threshold value is $\zeta_o$. Considering \eqref{ML22} and \eqref{det}, we can write  
$1=\underset{b_0}{\operatorname{argmax}} \hspace{.25 cm} p_Y[\zeta|B_0=b_0],$ for all $\zeta>\zeta_o$ and $0=\underset{b_0}{\operatorname{argmax}} \hspace{.25 cm} p_Y[\zeta|B_0=b_0],$ for all $\zeta<\zeta_o$. Equivalently, we have $p_Y[\zeta|B_0=1]\geq p_Y[\zeta|B_0=0]$ and $p_Y[y|B_0=1]\leq p_Y[\zeta|B_0=0]$ for all $\zeta>\zeta_o$ and $\zeta<\zeta_o$, respectively. Therefore, we can write $p_Y[\zeta|B_0=1]-p_Y[\zeta|B_0=0]\geq 0$ and $p_Y[\zeta|B_0=1]-p_Y[\zeta|B_0=0]\leq 0$ for all $\zeta>\zeta_o$ and $\zeta<\zeta_o$, respectively. Regarding to \eqref{derpe}, $\partial \mathrm{P}_e/\partial \zeta$ is positive and negative for $\zeta>\zeta_o$ and $\zeta<\zeta_o$, respectively. Equivalently, $\mathrm{P}_e$ is decreasing function for $\zeta<\zeta_o$ and increasing for $\zeta>\zeta_o$. Thereby, $\mathrm{P}_e$ is a quasiconvex function of $\zeta$ which has a global minimum at $\zeta_o$. 

Now, we provide the converse proof, i.e., if BER given in \eqref{STDpe1} is quasiconvex fucntion of $\zeta$ with global minimum at $\zeta_o$, the optimal ML detector in \eqref{ML22} is a STD with optimal threshold $\zeta_o$: Assuming $\mathrm{P}_e$ given in \eqref{STDpe1} is quasiconvex of $\zeta$ with a global minimum at $\zeta_o$, $\partial \mathrm{P}_e/\partial \zeta$ is positive and negative for all $\zeta>\zeta_o$ and $\zeta<\zeta_o$, respectively. Therefore, we have $p_Y[\zeta|B_0=1]-p_Y[\zeta|B_0=0]\geq0$ and $p_Y[\zeta|B_0=1]-p_Y[\zeta|B_0=0]\leq0$ for $\zeta>\zeta_o$ and $\zeta<\zeta_o$, respectively, considering \eqref{derpe}. Equivalently, $p_Y[\zeta|B_0=1]\geq p_Y[y|B_0=0]$ and $p_Y[\zeta|B_0=1]\leq p_Y[y|B_0=0]$ for all $\zeta>\zeta_o$ and $\zeta<\zeta_o$, respectively. Therefore, we can write  
$0=\underset{b_0}{\operatorname{argmax}} \hspace{.25 cm} p_Y[\zeta|B_0=b_0],$ for $\zeta<\zeta_o$ and $1=\underset{b_0}{\operatorname{argmax}} \hspace{.25 cm} p_Y[\zeta|B_0=b_0],$ for $\zeta>\zeta_o$ which is equivalent to optimality of STD with threshold $\zeta_o$, considering \eqref{ML22} and \eqref{det}.
\ifCLASSOPTIONcaptionsoff
  \newpage
\fi

\end{document}